    \DeclareMathOperator{\tr}{tr}
\theoremstyle{plain}
\newtheorem{theorem}{Theorem}
\newtheorem{assumption}{Assumption}
\newtheorem{lemma}{Lemma}
\newtheorem{cor}{Corollary}
\theoremstyle{definition}
\theoremstyle{remark}
\newtheorem*{rem}{Remark}
\newtheoremstyle{specialcasestyle}{1mm}{1mm}{\upshape}{}{\bfseries\upshape}{.}{0mm}{}
\theoremstyle{specialcasestyle}
\newcommand{\bPhi}{\boldsymbol{\Phi}}
\newcommand{\bxi}{\boldsymbol{\xi}}
\newcommand{\bR}{{\bf R}}
  \newcommand{\bA}{{\bf A}}
  \newcommand{\bB}{{\bf B}}
    \newcommand{\bI}{{\bf I}}
      \newcommand{\bh}{{\bf h}}
      \newcommand{\bH}{{\bf H}}
      \newcommand{\wbH}{\widehat{\bf H}}
      \newcommand{\wbHh}{\widehat{\bf H}^{\mbox{\tiny H}}}
      \newcommand{\wbh}{\widehat{\bf h}}
      \newcommand{\wbhh}{\widehat{\bf h}^{\mbox{\tiny H}}}
      \newcommand{\bx}{{\bf x}}
          \newcommand{\bz}{{\bf z}}
\newcommand{\bg}{{\bf g}}
        \newcommand{\bq}{{\bf q}}
      \newcommand{\bQ}{{\bf Q}}
       \newcommand{\herm}{^{\mbox{\tiny H}}}
\tikzset{
    myarrow/.style={-{Triangle[length=2mm,width=1.4mm]}}
}
\newcommand{\height}{7.0cm}
\newcommand{\asto}{{\xrightarrow[N\to\infty]{a.s.}} \hspace{0.1cm}}
\begin{document}
\title{ LMMSE Receivers in Uplink Massive MIMO Systems with Correlated Rician Fading}
\author{\small \authorblockN{Ikram Boukhedimi, Abla Kammoun, and Mohamed-Slim Alouini}\\
\authorblockA{Computer, Electrical, and Mathematical Sciences and Engineering Division,\\
King Abdullah University of Science and Technology (KAUST),\\Thuwal, Makkah Province, Kingdom of Saudi Arabia.\\Email: \{ikram.boukhedimi, abla.kammoun, slim.alouini\}@kaust.edu.sa}}
\maketitle
\vspace{-5 em}
\begin{abstract}\vspace{-1em}
{We carry out a theoretical analysis of the uplink (UL) of a massive MIMO system  with per-user channel correlation and Rician fading, using two processing approaches. Firstly, we examine the linear-minimum-mean-square-error receiver under training-based imperfect channel estimates. Secondly, we propose a statistical combining technique that is more suitable in environments with strong Line-of-Sight (LoS) components. We derive closed-form asymptotic approximations of the UL spectral efficiency (SE) attained by each combining scheme in single and multi-cell settings, as a function of the system parameters. These expressions are insightful in how different factors such as LoS propagation conditions and pilot contamination impact the overall system performance. Furthermore, they are exploited to determine the optimal number of training symbols which is shown to be of significant interest at low Rician factors. The study and numerical results substantiate that stronger LoS signals lead to better performances, and under such conditions, the statistical combining entails higher SE gains than the conventional receiver. }
\end{abstract}
\vspace{-1.5em}
\begin{IEEEkeywords}\vspace{-1em}
\hspace{-3em} \small Massive MIMO, correlated Rician fading, imperfect channel estimation, optimal training, LMMSE combining.
\end{IEEEkeywords}
\vspace{-2em}
\section{Introduction}\vspace{-0.5em}
 \par Future 5G networks are expected to support substantial amounts of mobile data traffic and ensure a reliable quality of service to the end users\cite{Noncooperative-Marzetta2010,5G-Andrews2014,MMIMO-Lu2014}. Over the past couple of years, massive MIMO has been extensively investigated, (see \cite{Noncooperative-Marzetta2010,5G-Andrews2014,MMIMO-Lu2014,HowMany-Jacob2013,Large-Wagner2012, lowcomplexity-abla2014,Coordinated-Ikram2017} and references therein). In general, most of these works consider Rayleigh fading channels that model scattered signals, yet do not encompass the possibility of a Line-of-Sight (LoS) component which is commonly present in practical wireless propagation scenarios and modeled by Rician-fading. At the same time, in order to meet the 5G performance demands, massive MIMO is expected to be omnipresent and thus, all propagation conditions ought to be examined. This is the case  for indoor applications or small areas operating over mmWave communications wherein the presence of LoS is conceivable \cite{MIMOmmWave-Sun2014}.  
In fact, active research is being conducted to study the pairing of massive MIMO systems with mmWave communications to jointly reap their benefits \cite{Massive-Bogale2016,mmWave-Roh2014,Millimiter-Swindle2014}.
Accordingly, understanding the performance of massive MIMO systems operating under LoS conditions is of growing importance and is the focus of the present work.

\par { The main literature related to this work is represented by \cite{Power-Zhang2014,Achievable-Tan2015,Performance-Tataria2016,Performance-Kong2015,asymptotic-falconet2016,LowComp-Ghasham2017,Achievable-Zhang2016,Asymptotic-Luca2017,Impact-tataria2017,Optimal-Jacob2011, Area-Nyarko2017, Globecom-Ikram2018, Downlink-Dong2018,Beamforming-Yue2015,MassiveMIMO-Ozgecan2018}. The authors in \cite{Power-Zhang2014} investigate the power scaling laws and the uplink (UL) rates using zero-forcing (ZF) and maximum-ratio (MR) combiners, however assuming uncorrelated channels. In \cite{Performance-Tataria2016}, an analytical study of the rates of the downlink (DL) of a MU-MIMO system assuming ZF beamforming and uncorrelated Rician fading channels is performed. Conversely, in \cite{asymptotic-falconet2016}, the authors use tools from random matrix theory (RMT) to conduct an asymptotic analysis of the DL of a spatially correlated MIMO Rician fading model, yet under the assumption of perfect channel state information at the base station (BS). A similar large system analysis relying on some recent RMT results is led in \cite{Asymptotic-Luca2017}. In this latter, the authors investigate regularized-ZF and MR-transmit schemes in the DL of a large-scale MIMO system under uncorrelated Rician fading system and assuming imperfect channel estimates. Moreover, in the same line, the authors in \cite{Optimal-Jacob2011} use similar tools to analyze the ergodic UL rates of a massive MIMO system and determine the optimal fraction of the coherence time used for channel training in a Rayleigh fading setting. The work in \cite{Beamforming-Yue2015} analyzes transmit and receive LoS-based beamforming designs which treat the scattered signals as interference in Rician fading massive MIMO systems. { Finally, the work \cite{MassiveMIMO-Ozgecan2018} recently appeared after our submission, and it focuses on correlated Rician massive MIMO systems using the less involved MR combining and precoding scheme.
{\let\thefootnote\relax\footnote{A part of this paper has been presented at the IEEE ICC18, May 2018, Kansas City, MO, USA.}}

\par In this paper, we investigate the UL performance of a massive MIMO system wherein every user is allotted a distinct channel correlation matrix and Rician factor, and above all, assuming imperfect CSI, and ultimately, pilot contamination. Furthermore, we consider two different combining schemes. The first method consists in the involved  Linear-Minimum-Mean-Square-Error (LMMSE) receiver which we will refer to as the `conventional' receiver in the sequel. Note that LMMSE's performance underlying such an intricate system model render the analysis comprehensive and unprecedented as, to the best of our knowledge, it has not been conducted thus far.  As to the second technique, we propose a `statistical' combiner that only utilizes the long-term statistics of the channels like the Rician factors, spatial correlation matrices, etc. In essence, this technique is purposely designed for LoS-prevailing environments to circumvent training and channel estimation and associated errors, and exploit the presence of LoS components in a more efficient manner. \\
{We first consider a single-cell network with the aforementioned comprehensive channel model,} and carry out a theoretical analysis of the achievable UL spectral efficiency (SE) for each receiver. Assuming the large antenna-limit with a fixed number of users, we harness some rudimentary asymptotic tools such as the law of large numbers (LLN) and convergence of quadratic forms \cite{eigenvaluesoutside-silverstein2009}, to derive closed-form approximations of the SEs. These approximations provide insights on the impact of the system parameters such as the training sequence's length, the Rician factor as well as the propagation conditions on the overall performances. Furthermore, we exploit them to determine the optimal number of training symbols that maximizes the UL SE whose value is shown to be  particularly crucial at low Rician factors. A relevant outcome of the study reveals that high Rician factors generate far better performances, and that in such environments, longer training sequences are rather counterproductive since they degrade the achievable SEs. This result, led us to propose the statistical receiver for systems with strong specular signals. As will be elaborated later, this scheme is obtained through the maximization of the corresponding UL SE and proven to outperform the conventional receiver in LoS-prevailing systems. { Note that in the conference paper \cite{Avergae-Ikram2017}, we present the preliminary results of this single-cell setting.} 
\par For a more thorough analysis, we extend our study to a multi-cell system in order to examine the performances of both combining schemes when they are subject to inter-cell interference, and especially to pilot contamination. 
Similarly to the singe-cell, we derive closed-form approximations of the achievable SEs under the asymptotic antenna regime. Accordingly, the impacts of LoS propagation conditions and pilot-contamination-induced interference are meticulously analyzed for both receivers. Ultimately, the discussion highlights interesting aspects on the interplay between the Rician factors and pilot contamination. Additionally, it  unveils that in a multi-cell setup, the proposed combiner outperforms the conventional one to an even higher extent than it is the case for a single-cell system. Evidently, numerical results are provided to validate the accuracy of our analytical findings and better illustrate the efficiency of both schemes in all settings for finite system dimensions, albeit computed in the asymptotic regime.} 
\par The remainder of the paper is organized as follows. Section \ref{sec:single} encompasses the single-cell system starting by the corresponding UL system model, followed by the detection schemes, and finally the theoretical analysis of the achievable spectral efficiencies that shed light on some interesting aspects. Then, pursuing a similar rationale as in Section \ref{sec:single}, Section \ref{sec:multi} focuses on the multi-cell systems. After that, Section \ref{sec:Num Results} consists of a selection of numerical results that confirm the theoretical derivations given in Sections \ref{sec:single} and \ref{sec:multi}. Finally, conclusions are drawn in Section\ref{sec:Conclusion}.
\vspace{-1.3em}
\section{Performance Analysis in a Single-Cell Setting}\label{sec:single}\vspace{-0.5em}
We consider uplink transmissions of a TDD single-cell system with $K$ mono-antenna users (UEs) and a BS equipped with $N$ antennas. Assuming Gaussian codebooks, the vector of the transmitted data symbols sent by all the UEs is denoted $\sqrt{\frac{p_u}{N}}\bx \sim\mathcal{CN}(0,\frac{p_u}{N} \bI_K)$ and therefore, the received signal at the BS writes{\let\thefootnote\relax \footnote{\small \it Notations: In the sequel, bold upper and lowercase characters refer to matrices and vectors, respectively. We also use $(.)^{\text{H}}$, $\tr(.)$ and $(.)^{-1}$ to denote the conjugate transpose, the trace of a matrix and the inverse operations, respectively. $log(.)$ is the natural logarithm, the $N \times N$ identity matrix is denoted $\textbf{I}_{N}$, and $\delta_{j\ell}$ is the Kronecker delta. Finally, $\left[{\bf A} \right]_{ij}$ is the element on the $i-$th row and $j-$th column of matrix $\bf A$ and $diag\left\{ a_i \right\}_{i=1}^{N}$ is the $N\times N$ diagonal matrix with $a_i$ being its $i-$th diagonal element.}}:
\vspace{-0.5em}\begin{equation}
{\bf y}=\sqrt{\frac{p_u}{N}} \bH \bx +{\bf n},
\label{eq:y sig_received}
\end{equation}
where $\bH=\left[\bh_1,\dots, \bh_K \right]$ is the $N\times K$ aggregated MIMO channel matrix from all UEs to the BS and ${\bf n}$ represents a zero-mean additive Gaussian noise with variance $\sigma^2$. Correlated Rician fading channels are considered such that the channel between the $k-$th UE and the BS is modeled as: \vspace{-0.5em}
\begin{equation}
\bh_k= \sqrt{\beta_k} \left(\sqrt{\frac{1}{1+\kappa_k}}\boldsymbol{\Theta}_k^{\frac{1}{2}}\bz_k + \sqrt{\frac{\kappa_k}{1+\kappa_k}}\overline{\bz}_k\right),
\label{eq:channel model}
\end{equation}
where $\beta_k$ accounts for the large-scale channel fading of UE$_k$ and the second term represents the small-scale fading channel. This latter consists of the Rayleigh component $\bz_k \sim \mathcal{CN}\left(0,\bI_N\right)$ to depict scattered or Non-LoS signals and the deterministic component $\overline{\bz}_k\in\mathbb{C}^{N}$ to represent the specular (LoS) signals. For each UE, the ratio between these components is depicted by the Rician factor $\kappa_k$. Plus, for each UE $k$, we consider a different channel correlation matrix $\boldsymbol{\Theta}_k$. Throughout the paper, $\forall k, \ \boldsymbol{\Theta}_k$ is assumed to be slowly varying compared to the channel coherence time and thus is supposed to be perfectly known to the BS.
Finally, for notational convenience, we let: $\overline{\bh}_k =\sqrt{\frac{\beta_k\kappa_k}{1+\kappa_k}}\overline{\bz}_k$ and ${\bf R}_k = \frac{\beta_k}{1+\kappa_k}\boldsymbol{\Theta}_k$. Therefore, $\bh_k \sim \mathcal{CN}\left(\overline{\bh}_k,{\bf R}_k\right)$.
\vspace{-1 em}
\subsection{Channel Estimation}
\vspace*{-0.5 em}
In practice, prior to processing the received signal, the BS estimates the channel matrix $\bH$. Let $\wbH=\left[\wbh_1,\dots,\wbh_K\right]$ denote the aggregate matrix of these estimates. In TDD systems, each UL channel coherence block of length $T$  is split into two phases starting by training and followed with data transmission
. In the pilot training interval of $\tau\geq K$ symbols
, all $K$ UEs broadcast orthogonal sequences of known pilot symbols with average power $\tau p_p$. It is important to note that in the considered Rician fading, since the specular components are hardly changing, it is reasonable to assume that both the LoS component and Rician factors of all UEs are known to both the transmitter and receiver. Accordingly, using single-cell LMMSE estimation, the estimate ${\wbh}_k$ of the channel $\bh_k$  is given by \cite{HowMany-Jacob2013} : 
\begin{equation}
{\wbh}_k= {\bf R}_k \bPhi_k \left({\bf y}^{tr}_k-\overline{\bh}_k\right)+ \overline{\bh}_k,
\label{eq:h_hat}
\end{equation}
where, 
$\bPhi_k =\left({\bf R}_k + \frac{1}{\tau\rho_{tr}}\bI_N\right)^{-1}, \quad {\bf y}_k^{tr} = \bh_k + \frac{1}{\sqrt{\tau\rho_{tr}}}n_k^{tr},$ 
 and $\rho_{tr} ={\dfrac{p_p}{\sigma^2}}$ is the SNR corresponding to the training phase.{ The higher value $\tau\rho_{tr}$ takes, the better quality of channel estimation becomes. In fact, as $\tau\rho_{tr} \rightarrow \infty, \ \wbH \rightarrow \bH$ which corresponds to the perfect CSI scenario.} From \eqref{eq:h_hat}, it can be shown that $\wbh_k \sim \mathcal{CN} \left(\overline{\bh}_k, \tilde{\bf R}_k\right),$ with $\tilde{\bf R}_k={\bf R}_k \bPhi_k{\bf R}_k$. Plus, considering the orthogonality property of LMMSE estimation, the estimation error $\bxi_k = \bh_k-\wbh_k$ follows the distribution $\mathcal{CN}\left(0, {\bf R}_k- \tilde{\bf R}_k\right)$.
\vspace{-1em}
\subsection{Detection and Achievable Uplink Spectral Efficiency}
To process the signal $\bf y$ \eqref{eq:y sig_received}, the BS uses a linear receiver. In this work, we are interested in the conventional LMMSE receiver that relies on acquired channel estimates, and propose a statistical receiver that is mainly based on the long-term parameters of the system. 
\subsubsection{Conventional LMMSE Receiver}
Let $\bg_k$ 
$\mathbb{C}^{N\times 1}$ denote the  conventional combining vector used to process the single sent by UE $k$. Under imperfect channel estimation conditions, $\bg_k$ is defined as \cite{book-Kay97,HowMany-Jacob2013}:
\begin{equation}
{\bg_k} = \left({\wbH\wbHh} + \sum_{i=1}^{K} \left({\bf R}_i-\tilde{\bf R}_i\right) + \frac{N}{\rho_d} \bI_N\right)^{-1}\wbh_k. \label{eq:G_MMSE_hat}
\end{equation}
In order to retrieve useful data, the BS generates the signal ${\bf r}={\bf G}^{\mbox{\tiny H}} {\bf y}.$ 
As shown in \eqref{eq:G_MMSE_hat}, the design of $\bg_k$ leverages the channel estimates $\wbH$, thus making the performances sensitive to channel estimation errors. Therefore, the $k$-th element of $\bf r$ can be decomposed as:
 \begin{equation}
{r}_k= 
\sqrt{\frac{p_u}{N}}
 {\sum_{\substack{i=1}}^{K} \bg\herm_k \wbh_i x_i} + \sqrt{\frac{p_u}{N}} \sum_{i=1}^{K} \bg\herm_k {\bf \bxi}_i x_i +  \bg\herm_k n_k,
\label{eq:decomposed r}
\end{equation}
This expression respectively, separates the signal and intra-cell interference, channel estimation errors and noise terms. Additionally, when a pre-training phase of $\tau$ symbols is performed, only a fraction of the total coherence block is used for useful data transmission. Therefore, denoting $\rho_d=\frac{p_u}{\sigma^2}$, the UL achievable SE for UE $k$, in case of channel-estimate based conventional processing is defined as \cite{fundamentals-marzetta2016}: \footnote{In the sequel, we add the superscripts $(.)^{\rm conv, S}$, $(.)^{\rm stat, S}$, $(.)^{\rm conv, M}$ and $(.)^{\rm stat, M}$ to, respectively, distinguish the relevant quantities corresponding to conventional combining and statistical combining, in single-cell and multi-cell schemes.} : 
{
\begin{align}
& {\rm SE}_k^{\rm conv,S}=\left(1-\frac{\tau}{T}\right)  
\mathbb{E}\Bigg[\log\bigg(1+ \frac{\vert{\bf g}_k\herm\wbh_k\vert^2}
{\mathbb{E}\big[{\sum_{\substack{i=1\\ i\neq k}}^{K}\vert{\bf g}_k\herm{\wbh}_i\vert^2}+{\sum_{\substack{i=1}}^{K}\vert{\bf g}_k\herm{\bf \bxi}_i\vert^2}+\frac{N}{\rho_d}\Vert{\bf g}_k\Vert^2\big]}\bigg)\Bigg]. 
\label{eq:sum rate hat}
\end{align} }
\vspace{-1em}
\subsubsection{Statistical LMMSE Receiver}
Due to its slow varying pace, the LoS component $\overline{\bH}=\left[\overline{\bh}_1,\dots,\overline{\bh}_K\right]$ can be easily estimated. For example, the BS may estimate the specular signals in a previous transmission from the UEs, in contrast to the Rayleigh signals which must be estimated at every $T$. In addition, choosing the right number of training symbols, $\tau$, is paramount to ensure the overall UL performances since a small $\tau$ entails significant estimation errors and a larger $\tau$ suggests less transmitted data. { Motivated by these factors, we propose in this work a statistical receiver denoted $\overline{\bg}_k$, that exclusively exploits the presence of the quasi-deterministic LoS component $\overline{\bH}$ and the long-term parameters of the system, such as the spatial correlation matrices ${\bf R}_k$, the large-scale fading factors $\beta_k$, etc. Naturally, using such a receiver enables to avoid training and channel estimation altogether, thereby yielding the single-cell UL SE ${\rm SE}_k^{\rm stat,S}$}:
\begin{equation}\label{eq:sum rate stat}
{\rm SE}_k^{\rm stat,S}= \mathbb{E}\left[\log\left(1+ \frac{\vert\overline{\bf g}_k\herm\overline{\bh}_k\vert^2}{\mathbb{E}\big[\overline{\bf g}_k\herm\left(\sum_{\substack{i=1}}^{K}{\bh}_i{\bh}\herm_i-\overline{\bh}_k\overline{\bh}_k\herm+\frac{N}{\rho_d}\bI_N\right)\overline{\bf g}_k\herm \big] }\right) \right].
\end{equation}
We propose to design $\overline{\bg}_k$ through the maximization of 
a deterministic equivalent of ${\rm SE}_k^{\rm stat,S}$ in the infinite antenna limit which we denote $\overline{\rm SE}_k^{\rm stat,S}$. Specifically, for $k=1,\dots, K$, $\overline{\bg}_k$ is defined as: \vspace{-1em}
\begin{align*}\label{P1}
&\overline{\bg}_k=  \ {\underset{\overline{\bg}_k}{\text{argmax}}} \ \overline{\rm SE}_k^{\rm stat}, \tag{P1} \\
& \text{s.t} \quad {\rm SE}_k^{\rm stat,S}-\overline{\rm SE}_k^{\rm stat,S} \asto 0.
\end{align*}
As shall be seen in the next section, on account that $\overline{\bg}_k$ is deterministic, one should note that $\overline{\rm SE}_k^{\rm stat,S}$ is obtained by means of quite rudimentary asymptotic tools. 
 \vspace{-2.5em}
\subsection{ Asymptotic Analysis of the Single-Cell Performances }\label{sec:Analysis}\vspace{-0.5em}
In this section, we carry out a comparative theoretical analysis between the UL performances achieved by the conventional  receiver, $\bg_k$ \eqref{eq:G_MMSE_hat}, and the proposed statistical combiner, $\overline{\bg}_k$ \eqref{P1}. The study is conducted under the assumption of imperfect channel state information, and a distinct Rician factor as well as channel correlation per user. Ultimately, the objective is to  determine conditions in which the statistical receiver outperforms the conventional one. Nonetheless, as can be seen from ${\rm SE}_k^{\rm conv,S}$ \eqref{eq:sum rate hat} and ${\rm SE}_k^{\rm stat,S}$ \eqref{eq:sum rate stat}, these expressions involve random quantities that are rather compact and do not lend themselves to simple interpretations nor manipulations. { Accordingly, we first derive closed-form asymptotic approximations of both ${\rm SE}_k^{\rm conv,S}$ and ${\rm SE}_k^{\rm stat,S}$ which we exploit thereafter for the comparison. To obtain these approximations, we consider the large-antenna limit with a fixed number of UEs. This can be formulated as:}
\vspace{-1.2em}
\begin{assumption} \label{ass:asymptotic}
We assume that $K$ is fixed while $N$ grows large without bound. We also consider that as $N \rightarrow \infty$, $\forall k$, the channel correlation matrix has a bounded spectral norm $\Vert \boldsymbol{\Theta}_k\Vert_2$ . For simplicity, this asymptotic regime will be denoted by $N\rightarrow \infty$. 
\end{assumption}\vspace{-1em}
\subsubsection{Conventional Combining in Single-Cell Systems}\vspace{-0.5em}
Define the matrices:\vspace{-0.6em}
{\begin{align}
&{\bf Q}=\left(\frac{1}{N}\overline{\bH}\herm\overline{\bH}+\frac{1}{N} diag\left\{\tr \tilde{\bf R}_\ell \right\}_{\ell=1}^{K}+\frac{1}{\rho_d}\bI_K\right)^{-1}, \label{eq:Q_hat} \\
& {\bf T}_i=\overline{\bH}\herm \frac{1}{\tau\rho_{tr}} {\bf R}_i \bPhi_i \overline{\bH} + diag \left\{\tr(\tilde{\bf R}_\ell \frac{1}{\tau\rho_{tr}}{\bf R}_i\bPhi_i)\right\}_{\ell=1}^{K},
\end{align}}\vspace{-0.5em}
and let ${\bq}_k$ be the $k-$th column of the matrix ${\bf Q}$.
\begin{theorem}[Conventional combining in single-cell systems]\label{th: MMSE hat}
 Under Assumption \ref{ass:asymptotic}, we have : $ {\rm SE}_k^{\rm conv,S} - \overline{\rm SE}_k^{\rm conv,S} \asto 0 $,  such that 
\end{theorem}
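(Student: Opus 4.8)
The plan is to derive the deterministic equivalent term by term, exploiting the fact that $\wbH\wbHh$ is a sum of only $K$ rank-one contributions while the remaining part of the receive matrix scales like $N$. Write $\mathbf{M}=\wbH\wbHh+\bA$ with the deterministic matrix $\bA=\sum_{i=1}^{K}(\bR_i-\tilde{\bf R}_i)+\frac{N}{\rho_d}\bI_N$, so that $\bg_k=\mathbf{M}^{-1}\wbh_k$. I would first carry out the inner expectation in the denominator of \eqref{eq:sum rate hat}: since LMMSE orthogonality makes $\bxi_i$ independent of $\wbH$ and hence of $\bg_k$, one has $\mathbb{E}\big[\sum_i|\bg_k\herm\bxi_i|^2\big]=\sum_i\bg_k\herm(\bR_i-\tilde{\bf R}_i)\bg_k$, and I would record the identity $\bR_i-\tilde{\bf R}_i=\frac{1}{\tau\rho_{tr}}\bR_i\bPhi_i$. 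This reduces the SINR to a ratio of four quadratic forms in $\mathbf{M}^{-1}$: the signal $|\bg_k\herm\wbh_k|^2$, the inter-user interference $\sum_{i\neq k}|\bg_k\herm\wbh_i|^2$, the estimation-error term $\sum_i\bg_k\herm(\bR_i-\tilde{\bf R}_i)\bg_k$, and the noise term $\frac{N}{\rho_d}\Vert\bg_k\Vert^2$.

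The key reduction is the matrix inversion lemma applied to the rank-$K$ update: $\mathbf{M}^{-1}\wbH=\bA^{-1}\wbH(\bI_K+\bB)^{-1}$ with $\bB=\wbH\herm\bA^{-1}\wbH$, which collapses every $N$-dimensional resolvent quantity into the $K\times K$ matrix $\bI_K+\bB$; in particular $\wbH\herm\mathbf{M}^{-1}\wbH=\bI_K-(\bI_K+\bB)^{-1}$. I would then obtain a deterministic equivalent of $\bB$ by splitting $\wbh_k=\overline{\bh}_k+(\wbh_k-\overline{\bh}_k)$ into its deterministic LoS mean and its zero-mean scatter part (with covariance $\tilde{\bf R}_k$), and applying the convergence-of-quadratic-forms (trace) lemma of \cite{eigenvaluesoutside-silverstein2009} together with the LLN to each entry $\wbh_a\herm\bA^{-1}\wbh_b$. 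The deterministic contributions assemble into $\frac{1}{N}\overline{\bH}\herm\overline{\bH}$, while the scatter parts produce the diagonal matrix $\frac{1}{N}\diag\{\tr\tilde{\bf R}_\ell\}_{\ell=1}^{K}$; because $\frac{N}{\rho_d}\bI_N$ dominates the bounded-norm sum $\sum_i(\bR_i-\tilde{\bf R}_i)$ in $\bA$, the matrix $\bA^{-1}$ is to leading order $\frac{\rho_d}{N}\bI_N$, so that $(\bI_K+\bB)^{-1}\asto\frac{1}{\rho_d}\bQ$ with $\bQ$ as defined before the theorem.

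With this in hand I would assemble the four terms. The signal and inter-user interference follow immediately from the entries of $\bI_K-(\bI_K+\bB)^{-1}$, i.e.\ from $[\bQ]_{kk}$ and the off-diagonal entries collected in $\bq_k$. The noise and estimation-error terms are the more delicate \emph{second-order} forms $\mathbf{e}_k\herm(\bI_K+\bB)^{-1}\wbH\herm\bA^{-1}\mathbf{C}\bA^{-1}\wbH(\bI_K+\bB)^{-1}\mathbf{e}_k$; here the inner bilinear object $\wbH\herm\bA^{-1}\mathbf{C}\bA^{-1}\wbH$ with $\mathbf{C}=\bR_i-\tilde{\bf R}_i$, after the same LoS/scatter split and trace-lemma step, is precisely what generates $\bT_i$ (the $\overline{\bH}\herm(\bR_i-\tilde{\bf R}_i)\overline{\bH}$ block from the LoS parts and the $\diag\{\tr(\tilde{\bf R}_\ell(\bR_i-\tilde{\bf R}_i))\}$ block from the scatter parts), so that $\sum_i\bg_k\herm(\bR_i-\tilde{\bf R}_i)\bg_k\asto\frac{1}{N^2}\sum_i\bq_k\herm\bT_i\bq_k$. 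Substituting all four deterministic equivalents into the SINR and passing through $\log(1+\cdot)$ by the continuous-mapping theorem and dominated convergence then yields $\overline{\rm SE}_k^{\rm conv,S}$.

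I expect the main obstacle to be the second-order quadratic forms in the noise and error terms: unlike the signal and interference, these do not collapse under a single trace-lemma application, and handling $\mathbf{M}^{-1}\mathbf{C}\mathbf{M}^{-1}$ requires the full Woodbury reduction followed by a separate deterministic equivalent of the inner form $\wbH\herm\bA^{-1}\mathbf{C}\bA^{-1}\wbH$. Closely related technical points are the careful separation of the deterministic LoS mean from the random scatter (so that cross terms $\overline{\bh}_a\herm\bA^{-1}(\wbh_b-\overline{\bh}_b)$ are shown to vanish), the rank-one deflation $\mathbf{M}_k=\mathbf{M}-\wbh_k\wbhh_k$ needed to restore independence between $\wbh_k$ and the resolvent before invoking the trace lemma, and the uniform-integrability argument justifying the transfer from almost-sure SINR convergence to convergence of the expected rate. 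As a consistency check, the MMSE identity $\bg_k\herm\mathbf{M}\bg_k=\bg_k\herm\wbh_k$ forces the four terms to satisfy $\text{interf}+\text{error}+\text{noise}=\bg_k\herm\wbh_k-|\bg_k\herm\wbh_k|^2$, which collapses the limiting SINR to $\frac{\rho_d}{[\bQ]_{kk}}-1$ and confirms that the $\bT_i$ contribution is an $O(1/N)$ estimation-error correction that vanishes in the limit while refining the finite-$N$ approximation.
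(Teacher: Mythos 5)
Your proposal is correct and follows essentially the same route as the paper's proof in Appendix~\ref{app: conv LMMSE}: the Woodbury collapse onto a $K\times K$ matrix (your $(\bI_K+\bB)^{-1}$ is exactly $\frac{1}{\rho_d}\tilde{\bf Q}$ there, since $\bA=\frac{N}{\rho_d}{\bf Z}^{-1}$), the LLN/quadratic-form step yielding $\bQ$ via the continuous mapping theorem, and the separate treatment of the estimation-error term via LMMSE orthogonality together with the identity $\bR_i-\tilde{\bf R}_i=\frac{1}{\tau\rho_{tr}}\bR_i\bPhi_i$, which produces $\bT_i$. The only cosmetic differences are that the paper keeps $\bxi_i$ inside the quadratic form and applies the trace lemma to it rather than taking the conditional expectation first, and that the rank-one deflation you anticipate as an obstacle is unnecessary here, because after the Woodbury reduction every quadratic form has a deterministic middle matrix, so independence from $\wbh_k$ is automatic.
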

{\begin{align}
\overline{\rm SE}_k&^{\rm conv,S} =\left(1-\frac{\tau}{T}\right) 
 \log\left(1+ 
\frac{\left|1- \frac{1}{\rho_d}\left[{\bf Q}\right]_{kk}\right|^2}{ {\sum_{\substack{i=1\\ i\neq k}}^{K}}\left|\frac{1}{\rho_d}\left[{\bf Q}\right]_{ki}\right|^2
+ \frac{1}{N^2} {\sum_{\substack{i=1}}^{K}} {\bq}_k\herm  {\bf T}_i {\bq}_k
+ \frac{1}{\rho_d}\left(\left[{\bf Q}\right]_{kk} -\frac{1}{\rho_d} \left[{\bf Q}^2\right]_{kk}\right)
 }\right). \label{eq:R MMSE DE}
\end{align}}
\vspace{-0.5em}\begin{proof}A proof is given in Appendix \ref{app: conv LMMSE}
\end{proof}
We provide in the closed-form expression \eqref{eq:R MMSE DE} approximations of all the different terms constituting ${\rm SE}_k^{\rm conv,S}$. This allows to have some insights on the behavior of these signals and their impact on the achievable SE. Note nonetheless that further simplifications can be made in the infinite antenna limit. For instance, we can see that as $N$ grows infinitely large and for a fixed $K$ : {$ \frac{1}{N^2} \sum_{i=1}^{K} {\bq}_k\herm  {\bf T_i} {\bq}_k \asto 0$}, therefore implying that channel estimation errors vanish in the UL massive MIMO setting. Accordingly, $\overline{\rm SE}_k^{\rm conv,S}$ \eqref{eq:R MMSE DE} amounts to: 
{\begin{align}
&\overline{\rm SE}_k^{\rm conv,S} =\left(1-\frac{\tau}{T}\right) \log\left(\frac{\rho_d}{\left[{\bf Q}\right]_{kk}}\right)+\mathcal{O}\left(\frac{1}{N}\right). \label{eq:R MMSE DE simplified}
\end{align}
Another key point in $\overline{\rm SE}_k^{\rm conv,S}$ \eqref{eq:R MMSE DE}, is the term ${\sum_{\substack{i=1\\ i\neq k}}^{K}}\left|\frac{1}{\rho_d}\left[{\bf Q}\right]_{ki}\right|^2 $, which represents an approximation of intra-cell interference. If Rayleigh fading is considered (\textit{i.e.} $\overline{\bH}=\mathbf{0}_{N\times K}$), this term cancels out. Pursuant to \cite{HowMany-Jacob2013}, this result confirms that in the setting $N \rightarrow \infty$ while $K$ is fixed, intra-cell interference due to Rayleigh fading dissipates. However, as we can see in this paper, in Rician fading, the specular signals generate intra-cell interference which is embodied by the inner products between $\overline{\bh}_k$ and $\overline{\bh}_i$, $(i,k, 1,\dots,K)$. In light of this outcome, one way to eliminate interference is to have LoS components that are mutually orthogonal between users. This  circumstance can be accomplished under asymptotic favorable propagation conditions where, $\frac{1}{N}\overline{\bh}_i\herm\overline{\bh}_j\asto 0$, $i\neq j$, therefore yielding 
$ {\sum_{\substack{i=1\\ i\neq k}}^{K}}\left|\frac{1}{\rho_d}\left[{\bf Q}\right]_{ki}\right|^2  \asto 0.$ Hence, with the elimination of interference, we can conclude that for Rician fading, better performances are achieved in favorable propagation environments, specifically:  
\vspace{-1em}
\begin{cor}[Favorable Propagation] \label{cor: rate MMSE CS}
if $\frac{1}{N}\overline{\bh}_i\herm\overline{\bh}_j\asto 0$, for $i\neq j$, we have:
\begin{align}
\overline{\rm SE}_k&^{\rm conv,S} =\left(1-\frac{\tau}{T}\right)  
 \log\left[1+  \frac{\rho_d}{N}\left(\tr \tilde{\bf R}_k +\Vert\overline{\bh}_k\Vert^2\right)\right]+\mathcal{O}\left(\frac{1}{N}\right).
\label{eq:rate MMSE fav cond}
\end{align}
\end{cor}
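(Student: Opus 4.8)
The plan is to obtain the result directly from the already-simplified large-antenna expression \eqref{eq:R MMSE DE simplified}, rather than from the full form \eqref{eq:R MMSE DE}. Recall that in passing from \eqref{eq:R MMSE DE} to \eqref{eq:R MMSE DE simplified} the vanishing estimation-error term $\frac{1}{N^2}\sum_i \bq_k\herm {\bf T}_i \bq_k$ and the remaining $\mathcal{O}(1/N)$ contributions have been absorbed into the remainder, while the intra-cell interference term cancels algebraically against the off-diagonal part of $[{\bf Q}^2]_{kk}$, leaving
\[
\overline{\rm SE}_k^{\rm conv,S}=\left(1-\frac{\tau}{T}\right)\log\left(\frac{\rho_d}{[{\bf Q}]_{kk}}\right)+\mathcal{O}\left(\frac{1}{N}\right).
\]
Consequently the entire task reduces to evaluating the single diagonal entry $[{\bf Q}]_{kk}$ of the $K\times K$ matrix defined in \eqref{eq:Q_hat} under the favorable-propagation hypothesis.

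First I would use the hypothesis $\frac{1}{N}\overline{\bh}_i\herm\overline{\bh}_j\asto 0$ for $i\neq j$ to argue that the Gram matrix $\frac{1}{N}\overline{\bH}\herm\overline{\bH}$ is asymptotically diagonal: its off-diagonal entries are exactly these vanishing inner products, while its $k$-th diagonal entry equals $\frac{1}{N}\Vert\overline{\bh}_k\Vert^2$. Adding the purely diagonal matrices $\frac{1}{N}\diag\{\tr\tilde{\bf R}_\ell\}_{\ell=1}^K$ and $\frac{1}{\rho_d}\bI_K$, the matrix ${\bf Q}^{-1}$ therefore converges to a diagonal matrix. To transfer this to ${\bf Q}$ itself I would invoke continuity of matrix inversion: because of the additive term $\frac{1}{\rho_d}\bI_K$, the smallest eigenvalue of ${\bf Q}^{-1}$ is bounded below by $1/\rho_d$ uniformly in $N$, so inversion is Lipschitz on this set and the vanishing off-diagonal perturbations map to vanishing perturbations of ${\bf Q}$. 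Hence ${\bf Q}$ becomes asymptotically diagonal with
\[
[{\bf Q}]_{kk}\to\left(\frac{1}{N}\Vert\overline{\bh}_k\Vert^2+\frac{1}{N}\tr\tilde{\bf R}_k+\frac{1}{\rho_d}\right)^{-1}.
\]
Here I would also check, using the bounded-spectral-norm part of Assumption \ref{ass:asymptotic} and the standard unit-modulus normalization of the LoS vectors, that $\frac{1}{N}\Vert\overline{\bh}_k\Vert^2$ and $\frac{1}{N}\tr\tilde{\bf R}_k$ are $\mathcal{O}(1)$, so that $[{\bf Q}]_{kk}=\mathcal{O}(1)$ and the logarithm remains well-behaved.

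Finally, substituting this value into the simplified expression and computing
\[
\frac{\rho_d}{[{\bf Q}]_{kk}}=1+\frac{\rho_d}{N}\left(\tr\tilde{\bf R}_k+\Vert\overline{\bh}_k\Vert^2\right)
\]
yields the claimed form \eqref{eq:rate MMSE fav cond} after taking the logarithm. I expect the only delicate point to be the bookkeeping of error orders: the favorable-propagation condition supplies only an $o(1)$ decay of the off-diagonals with no prescribed rate, so strictly the diagonal replacement of $[{\bf Q}]_{kk}$ is accurate to $o(1)$; the stated $\mathcal{O}(1/N)$ is inherited from \eqref{eq:R MMSE DE simplified} and from treating the deterministic LoS vectors as yielding an exactly diagonal Gram matrix in the favorable-propagation regime. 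Apart from this, the argument is a short consequence of Theorem \ref{th: MMSE hat}.
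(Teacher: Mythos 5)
Your proposal is correct and follows essentially the same route as the paper, which obtains the corollary by substituting the favorable-propagation condition into Theorem \ref{th: MMSE hat} so that ${\bf Q}$ becomes asymptotically diagonal with $[{\bf Q}]_{kk}\to\left(\frac{1}{N}\Vert\overline{\bh}_k\Vert^2+\frac{1}{N}\tr\tilde{\bf R}_k+\frac{1}{\rho_d}\right)^{-1}$, whence $\frac{\rho_d}{[{\bf Q}]_{kk}}=1+\frac{\rho_d}{N}\left(\tr\tilde{\bf R}_k+\Vert\overline{\bh}_k\Vert^2\right)$. Your observation that the intra-cell interference term cancels identically against the off-diagonal part of $[{\bf Q}^2]_{kk}$ (so that favorable propagation is needed only to evaluate $[{\bf Q}]_{kk}$), together with the Lipschitz-inversion justification and the honest $o(1)$-versus-$\mathcal{O}(1/N)$ caveat, is a sound and slightly more careful rendering of the paper's implicit argument.
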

Furthermore, under the same settings of corollary \ref{cor: rate MMSE CS}, we demonstrate in \cite{Avergae-Ikram2017}, where we compare LMMSE and Matched Filters (MF), that the UL SE \eqref{eq:rate MMSE fav cond} is in fact identical to when MF is used. Accordingly, we find that in massive MIMO with Rician fading channels, LMMSE and MF receivers attain comparable performances only under favorable propagation conditions. This is, however, different from Rayleigh fading, wherein similar performances are obtained by the receivers \textit{(see \cite[Eq.(13)]{Noncooperative-Marzetta2010} and \cite[Remark 3.4]{HowMany-Jacob2013})}.
On another note, from the expression of channel estimates $\wbh_k$ \eqref{eq:h_hat}, it can be shown by a simple eigenvalue decomposition that for low CSI, ($i.e.$ $\tau\rho_{tr} \rightarrow 0$), $\tr \tilde{\bf R}_i \rightarrow 0$, $ \forall \ i$. In such a case, we can see from the SE expressions $\overline{\rm SE}_k^{\rm conv,S}$ \eqref{eq:R MMSE DE simplified} and \eqref{eq:rate MMSE fav cond}, that  the UL performances degrade with the deterioration of  the CSI quality, and ${\rm SE}_k^{\rm conv,S}$ will be mainly determined by the specular signals. Accordingly, we can state that the strength of the LoS component is peculiarly beneficial when the channels are poorly estimated. By the same token, having reliable CSI becomes of greater importance as the LoS component weakens. Consequently, good channel estimates highly impact the performances; however, in Rician fading channels, it is of utmost relevance to have a receiver that exploits the presence and strength of the specular signals in an efficient manner.  
}
\subsubsection{Optimal Training} 
\quad Define $\gamma_k\left(\tau\right)$, such that \eqref{eq:R MMSE DE simplified} writes: $\overline{\rm SE}_k^{\rm conv,S} =$ $\left(1-\frac{\tau}{T}\right)\log\left(1+\gamma_k\left(\tau\right)\right)$.
{ We determine in the next Theorem the optimal value $\tau^*$ that maximizes the achievable average SE.  The objective is to determine the optimal number of symbols out of the total coherence symbols to be dedicated for training, for a fixed power allocation. Therefore, $\tau^*\geq K$ to preserve orthogonality of the pilot sequences, and evidently, $\tau^*< T$}. Accordingly, $\tau^*$ is solution to the optimization problem: \vspace{-1.5em}
 \begin{align*} \tag{P2}\label{P:tau opt}
\tau^{*}=&\underset{\tau} {\rm argmax} \ \frac{1}{K} \sum_{k=1}^{K} \overline{\rm SE}_k^{\rm conv,S}, \\\vspace{-2em}
&{\bf \it s.t.} \ K \leq \tau<T. \notag
\end{align*}
\vspace{-2 em}
\begin{theorem}[Optimal training]\label{th:optim} Under imperfect channel estimates, the optimal training length is given by : 
\begin{itemize}
\item If:
\begin{equation}\label{eq:K tau opti}
\frac{1}{K} \sum_{k=1}^{K} \left[\left(T-{K}\right) \frac{\gamma'_k \left(K\right)}{1+\gamma_k\left(K\right)}-\log\left(1+\gamma_k \left(K\right)\right)\right]\leq 0, 
\end{equation}
 then $\tau^*= K$. 
\item \textit{Otherwise:} $\tau^*$ is the solution to the fixed point equation: 
\begin{equation}\label{eq:tau_opt fix}
\tau^* = T - \frac{\frac{1}{K} \sum_{k=1}^{K}\log\left(1+\gamma_k(\tau)\right)}{\frac{1}{K} \sum_{k=1}^{K}\frac{\gamma'_k (\tau)}{1+\gamma_k(\tau)}},
\end{equation}
where $\gamma_k'\left(\tau\right)$ is the derivative of $\gamma_k\left(\tau\right)$ with respect to $\tau$.
\end{itemize}
\end{theorem}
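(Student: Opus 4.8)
The plan is to treat problem \eqref{P:tau opt} as a one‑dimensional smooth optimization over the interval $[K,T)$ and to characterize its maximizer through first‑order conditions. First I would introduce the shorthand $f(\tau)=\frac{1}{K}\sum_{k=1}^{K}\overline{\rm SE}_k^{\rm conv,S}=(1-\frac{\tau}{T})\,g(\tau)$, where $g(\tau)=\frac{1}{K}\sum_{k=1}^{K}\log(1+\gamma_k(\tau))$ and hence $g'(\tau)=\frac{1}{K}\sum_{k=1}^{K}\frac{\gamma'_k(\tau)}{1+\gamma_k(\tau)}$. Differentiating the product gives $f'(\tau)=-\frac{1}{T}g(\tau)+(1-\frac{\tau}{T})g'(\tau)$. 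Setting $f'(\tau)=0$ and rearranging yields $T-\tau=g(\tau)/g'(\tau)$, which is exactly the fixed‑point equation \eqref{eq:tau_opt fix}; evaluating $f'$ at the left endpoint $\tau=K$ and multiplying by $T>0$ reproduces the sign condition in \eqref{eq:K tau opti}. Thus both displayed conditions are simply the stationarity and endpoint conditions of $f$.

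To turn these first‑order relations into a genuine characterization of the global maximizer, the key step (and the main obstacle) is to establish that $f$ is concave on $[K,T)$. I would argue this via the monotonicity and concavity of $\gamma_k(\tau)$ in $\tau$. Diagonalizing ${\bf R}_k$ with eigenvalues $\{\lambda_{k,n}\}$ and writing $u=\tau\rho_{tr}$, one has $\tr\tilde{\bf R}_k=\tr({\bf R}_k\bPhi_k{\bf R}_k)=\sum_n\frac{\lambda_{k,n}^2 u}{1+\lambda_{k,n}u}$, each summand being increasing and concave in $u$ (its $u$‑derivatives are $\lambda^2/(1+\lambda u)^2>0$ and $-2\lambda^3/(1+\lambda u)^3<0$). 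In the favorable‑propagation regime of Corollary \ref{cor: rate MMSE CS}, $\gamma_k(\tau)=\frac{\rho_d}{N}(\tr\tilde{\bf R}_k+\Vert\overline{\bh}_k\Vert^2)$ is an increasing affine function of $\tr\tilde{\bf R}_k$, hence itself increasing and concave; composing with the concave increasing map $x\mapsto\log(1+x)$ shows that each $\log(1+\gamma_k(\tau))$, and therefore $g$, is concave with $g'\geq 0$ and $g''\leq 0$. Consequently $f''(\tau)=-\frac{2}{T}g'(\tau)+(1-\frac{\tau}{T})g''(\tau)\leq 0$ on $[K,T)$, so $f$ is concave and $f'$ is nonincreasing.

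With concavity in hand the dichotomy follows immediately. If $f'(K)\leq 0$, which is precisely \eqref{eq:K tau opti}, then $f'\leq 0$ throughout $[K,T)$, so $f$ is nonincreasing and the constrained maximum is attained at the left endpoint, $\tau^*=K$. Otherwise $f'(K)>0$; since the prefactor forces $f(\tau)\to 0$ as $\tau\to T^-$ while $g(\tau)\to g(T^-)>0$ stays bounded and positive, we get $f'(\tau)\to-\frac{1}{T}g(T^-)<0$, so $f'$ changes sign exactly once on $(K,T)$. The unique interior zero of $f'$ is then the global maximizer, and by the computation above it solves the fixed‑point equation \eqref{eq:tau_opt fix}.

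The step I expect to be delicate is the concavity claim beyond the favorable‑propagation simplification, where $\gamma_k(\tau)=\rho_d/[{\bf Q}]_{kk}-1$ depends on $\tau$ jointly through all the traces $\tr\tilde{\bf R}_\ell$ entering the $K\times K$ matrix ${\bf Q}$ of \eqref{eq:Q_hat}. Propagating the monotone–concavity of $u\mapsto\tr\tilde{\bf R}_\ell$ through the matrix inverse (rather than through the clean scalar form of \eqref{eq:rate MMSE fav cond}) is where the real work lies: one must control $[{\bf Q}]_{kk}$ as a function of $\tau$, and a fully general proof may require either restricting to the regime in which this monotonicity is guaranteed or invoking the favorable‑propagation form of \eqref{eq:R MMSE DE simplified} to make the composition argument rigorous.
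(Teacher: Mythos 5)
Your skeleton matches the paper's proof: both reduce \eqref{P:tau opt} to showing that $f(\tau)=\frac{1}{K}\sum_{k}\overline{\rm SE}_k^{\rm conv,S}$ has a monotonically decreasing derivative on $[K,T)$, and then read off the dichotomy — $f'(K)\leq 0$ gives \eqref{eq:K tau opti} and $\tau^*=K$, while an interior zero of $f'$ gives the fixed-point equation \eqref{eq:tau_opt fix}. Your boundary argument $f'(\tau)\to-\frac{1}{T}g(T^-)<0$ as $\tau\to T^-$ is in fact slightly more explicit than the paper's bare existence claim for the stationary point $\tau_0$.

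The gap is exactly where you flagged it, and it is genuine: the theorem is stated for the general $\gamma_k(\tau)=\rho_d/[\bQ]_{kk}-1$ of \eqref{eq:R MMSE DE simplified}, in which $\tau$ enters through \emph{all} $K$ traces $\tr\tilde{\bf R}_\ell$ sitting inside the inverse matrix $\bQ$ of \eqref{eq:Q_hat}, whereas your concavity proof only covers the favorable-propagation scalar form \eqref{eq:rate MMSE fav cond}. Without $\gamma'_k\geq 0$ and $\gamma''_k\leq 0$ in the general case, the inequality $f''\leq 0$ — the crux on which the whole dichotomy rests — is unproven, and the composition argument cannot simply be ``propagated through the inverse,'' since $[\bQ]_{kk}$ mixes the training gains of all users. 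The paper closes this step by direct matrix differentiation: defining $\overline{\bf D}_\alpha=\frac{(-1)^\alpha}{\rho_{tr}\tau^\alpha}\diag\big\{\frac{1}{N}\tr({\bf R}_\ell^\alpha\bPhi_\ell^\alpha)\big\}_{\ell=1}^{K}$, it obtains $\gamma'_k(\tau)=\rho_d\,\bq_k\herm\overline{\bf D}_2\bq_k/\left([\bQ]_{kk}\right)^2\geq 0$, and shows $\gamma''_k(\tau)\leq 0$ using that $\overline{\bf D}_2$ is positive semi-definite, $\overline{\bf D}_3$ is negative semi-definite, and $\bQ[\bQ]_{kk}-\bq_k\bq_k\herm$ is positive semi-definite (a Gram-matrix consequence of ${\bf a}\herm{\bf a}\,\bI-{\bf a}{\bf a}\herm\succeq 0$: write $\bQ={\bf C}{\bf C}\herm$ and $\bq_k={\bf C}{\bf c}_k$). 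The sign-definiteness of $\overline{\bf D}_2$ and $\overline{\bf D}_3$ is precisely the matrix analogue of your scalar observation that $u\mapsto\lambda^2u/(1+\lambda u)$ is increasing and concave, so your idea does extend — but this Cauchy--Schwarz-type PSD step on $\bQ$ is the missing ingredient, and without it your proposal proves the theorem only in the favorable-propagation special case rather than in the stated generality.
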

\vspace{-1em}
\begin{proof}[Proof]
A proof of Theorem \ref{th:optim} is given in Appendix \ref{app:tau_opt}. 
\end{proof}

\par Note that the derivation of $\tau^*$ relies on $\overline{\rm SE}_k^{\rm conv,S}$ considering its explicit form relatively to the intractable alternative, ${\rm SE}_k^{\rm conv,S}$. The results of Theorem \ref{th:optim} will be validated by simulations. Nevertheless, they can be exploited to get some insights on the behavioural tendencies of the choice of $\tau$ and the overall uplink performances. For instance, an interesting direction is the impact of the Rician factor $\kappa$ on the choice of $\tau$. In order to investigate this point, we consider the following case study.
\subsubsection*{\bf Case study}  Let us examine the case where if $K$ is kept fixed and $N$ grows without bound:$\frac{1}{N}\overline{\bh}_i\herm\overline{\bh}_j\asto\frac{\beta_k\kappa_k}{1+\kappa_k}\delta_{ij}$, where $\delta_{ij}$ is the Kronecker  delta. Additionally, let $\forall k: {\bf R}_k = \frac{\beta_k}{1+\kappa_k} \bI_N$, $\beta_k=\beta$ and $\kappa_k=\kappa$. Thus, according to corollary  \ref{cor: rate MMSE CS}: 
\begin{equation}
\gamma\left(\tau\right) = \frac{\beta \rho_d}{1+\kappa}\left(\frac{1}{1+\kappa+\frac{1}{\tau \rho_{tr}}} + \kappa\right).
\end{equation}
\paragraph{Low Rician factor} consider small values of $\kappa$ 
\begin{itemize}
\item At a low SNR level ( $\rho_d$ approaches $0$),  the solution \eqref{eq:tau_opt fix} can be rewritten as : 
\begin{equation}
\lim_{\kappa\rightarrow 0} \tau^* =  T- \frac{\left(1+\beta \rho_{tr}\tau\right)\left(1+\beta \rho_{tr}\tau+\beta^2\rho_{tr}\tau\rho_d\right)\log\left(1+\frac{\beta^2\rho_{tr}\tau}{1+\beta \rho_{tr}\tau}\right)}{\beta^2\rho_t\rho_d}.
\end{equation}
Using Taylor's expansion in the low SNR regime yields : 
\begin{equation}\label{eq: tau_opt low snr}
\lim_{\kappa\rightarrow 0} \tau^* = \max \left\{K, \frac{-1+\sqrt{1+\beta\rho_{tr} T}}{\beta \rho_{tr}} \right\}.
\end{equation}
\end{itemize}
\paragraph{High Rician factor} For high values of the Rician factor, $\kappa$,
thus leading to $\gamma'(K)=0$, and therefore, \eqref{eq:K tau opti} is always verified, hence : 
 \begin{equation}
\lim_{\kappa \rightarrow \infty} \tau^*=K. 
\label{eq:high LOS K}
\end{equation}
{
This case study sheds some light on how the optimal number of training symbols depends on the large-scale fading parameters, the number of users, the UL SNR and the coherence interval. The first example represents the case wherein the Rayleigh fading is governing at poor SNR levels. As can be seen from \eqref{eq: tau_opt low snr}, $\tau^*$  depends on the system parameters and on the available SNR during training, $\rho_{tr}$. For instance, if this latter is also low, \eqref{eq: tau_opt low snr} yields  $\lim_{\kappa\rightarrow 0} \tau^* = \max \left\{K, \frac{T}{2} \right\}$. This result implies that in a network setting where $T>2K$, to ensure the best performances, half of the total transmitted symbols should be dedicated to training and the other half to useful data. Conversely, if more users are considered such that $T\leq 2K$, then, the optimal number of training symbols should not go beyond the imposed minimum, $K$. 
}
In the second example, as the Rician factor takes higher values, we find that $\tau^*$ always approaches $K$ \eqref{eq:high LOS K}. Consequently, in such circumstances, there is no need to perform any optimization since the optimal number of training symbols is limited to the minimum possible value to ensure pilot orthogonality, namely $K$. More importantly,  we deduce that above a certain $\kappa$, investing in more training samples is not optimal in terms of spectral efficiency as it will have a minor impact on the achievable UL SE. In fact, it might even induce performance losses when $\tau \gg K$, as shall be illustrated in simulations. This result motivates us to analyse, in the next section, the potential outcomes of employing the statistical receiver $\overline{\bg}_k$ \eqref{P1} in LoS-prevailing environments.
%
\subsubsection{Statistical Combining in Single-Cell Systems}\label{sec:Average}
As previously mentioned, the objective of having a statistical receiver is to eliminate both training and channel estimation and to exploit the presence of the Rician component efficiently. To this end, the proposed combining vector $\overline{\bg}_k$ is obtained through the maximization of a deterministic approximation of the UL spectral efficiency ${\rm SE}_k^{\rm stat,S}$ \eqref{eq:sum rate stat}, as depicted in \eqref{P1}. Taking into account that $\overline{\bg}_k$ is deterministic itself, a direct application of the convergence of quadratic forms lemma \cite{eigenvaluesoutside-silverstein2009} and continuous mapping Theorem \cite{book-Prob95}, yields:  
\begin{theorem}[Statistical combining in single-cell systems]\label{th:statistical rate} Under Assumption \ref{ass:asymptotic},  ${\rm SE}_k^{\rm stat,S} - \overline{\rm SE}_k^{\rm stat,S}\asto 0$, with:
\begin{equation}\label{eq:R stat DE}
\overline{\rm SE}_k^{\rm stat,S} = 
 \log\left[1 + 
 \frac{\overline{\bf g}_k\herm \left(\frac{\overline{\bh}_k\overline{\bh}\herm_k}{N}\right)\overline{\bf g}_k}
 {\overline{\bf g}_k\herm\left(\frac{1}{N}\sum_{\substack{i=1}}^{K}  {\bf R}_i+\frac{1}{N}\overline{\bH}_k\overline{\bH}_k\herm+\frac{1}{\rho_d} \bI_N\right)\overline{\bf g}_k}\right], 
\end{equation}
where $\overline{\bH}_k$ is obtained by removing the $k-$th column from the LoS-channels matrix $\overline{\bH}$.
\end{theorem}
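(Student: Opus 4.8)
The plan is to turn the fact that $\overline{\bg}_k$ is, by its very definition in \eqref{P1}, a deterministic vector depending only on the long-term statistics $\{\overline{\bh}_i\}$ and $\{\bR_i\}$---and therefore statistically independent of the instantaneous channels $\bh_i$---into an exact evaluation of the moments appearing in \eqref{eq:sum rate stat}. This single feature is what makes the argument ``rudimentary'': since $\overline{\bg}_k$ carries no randomness, it factors out of every expectation, the numerator $|\overline{\bg}_k\herm\overline{\bh}_k|^2=\overline{\bg}_k\herm\overline{\bh}_k\overline{\bh}_k\herm\overline{\bg}_k$ is already deterministic, and only the interference-plus-noise quadratic form in the denominator needs to be handled.

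First I would compute the expected interference-plus-noise kernel. Writing $\bh_i=\overline{\bh}_i+\tilde{\bh}_i$ with $\tilde{\bh}_i\sim\mathcal{CN}(0,\bR_i)$ and using the second-moment identity $\mathbb{E}[\bh_i\bh_i\herm]=\overline{\bh}_i\overline{\bh}_i\herm+\bR_i$,
\begin{align*}
\mathbb{E}\Big[\sum_{i=1}^{K}\bh_i\bh_i\herm-\overline{\bh}_k\overline{\bh}_k\herm\Big]
&=\sum_{i=1}^{K}\big(\overline{\bh}_i\overline{\bh}_i\herm+\bR_i\big)-\overline{\bh}_k\overline{\bh}_k\herm \\
&=\sum_{i\neq k}\overline{\bh}_i\overline{\bh}_i\herm+\sum_{i=1}^{K}\bR_i=\overline{\bH}_k\overline{\bH}_k\herm+\sum_{i=1}^{K}\bR_i .
\end{align*}
The $i=k$ LoS outer product cancels \emph{exactly} against $\overline{\bh}_k\overline{\bh}_k\herm$, and the remaining rank-one LoS contributions over $i\neq k$ assemble into $\overline{\bH}_k\overline{\bH}_k\herm$, the matrix obtained by deleting the $k$-th column of $\overline{\bH}$. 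Reinstating the noise term $\tfrac{N}{\rho_d}\bI_N$, pulling $\overline{\bg}_k$ back inside, and dividing numerator and denominator by $N$, the ratio inside the logarithm becomes precisely the deterministic quantity in \eqref{eq:R stat DE}.

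To promote this moment computation to the almost-sure statement, I would invoke the convergence-of-quadratic-forms lemma \cite{eigenvaluesoutside-silverstein2009}. Since $\overline{\bg}_k$ is deterministic (and, by the scale invariance of the SINR in $\overline{\bg}_k$, may be taken to have bounded norm), each Gaussian contribution $\tfrac1N|\overline{\bg}_k\herm\tilde{\bh}_i|^2=\tfrac1N\bz_i\herm\big(\bR_i^{1/2}\overline{\bg}_k\overline{\bg}_k\herm\bR_i^{1/2}\big)\bz_i$ is a normalized quadratic form in the standard Gaussian $\bz_i$ with a deterministic rank-one kernel whose spectral norm is bounded by $\|\bR_i\|\,\|\overline{\bg}_k\|^2$, itself bounded under Assumption \ref{ass:asymptotic}; the lemma then gives $\tfrac1N|\overline{\bg}_k\herm\tilde{\bh}_i|^2-\tfrac1N\overline{\bg}_k\herm\bR_i\overline{\bg}_k\asto 0$, while the zero-mean cross terms $\tfrac1N\overline{\bg}_k\herm\overline{\bh}_i\tilde{\bh}_i\herm\overline{\bg}_k$ vanish a.s. for the same reason. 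As $K$ is fixed, summing these finitely many convergences leaves the interference-plus-noise quadratic form converging a.s. to the deterministic kernel computed above; the regularizer $\tfrac1{\rho_d}\bI_N$ keeps the denominator uniformly bounded away from zero, so the SINR is a continuous function of the converging quantities and the continuous mapping theorem \cite{book-Prob95} transports the convergence through the division and the logarithm, yielding ${\rm SE}_k^{\rm stat,S}-\overline{\rm SE}_k^{\rm stat,S}\asto 0$.

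The main obstacle I anticipate is purely one of bookkeeping the orders of magnitude: one must check that, under the adopted normalization of $\overline{\bg}_k$, the LoS outer products $\overline{\bh}_i\overline{\bh}_i\herm$ (whose norms scale like $N$) and the bounded-norm correlation matrices $\bR_i$ enter the $\tfrac1N$-normalized denominator at the correct relative order, and that the fluctuations $\tilde{\bh}_i$ do not survive in the limit. The deterministic character of $\overline{\bg}_k$ is exactly what removes any correlation between the filter and these fluctuations, and is therefore the hypothesis on which the whole ``direct application'' hinges.
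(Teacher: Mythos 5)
Your proposal is correct and follows essentially the same route as the paper, which proves Theorem \ref{th:statistical rate} in one line by noting that $\overline{\bg}_k$ is deterministic and invoking the convergence-of-quadratic-forms lemma \cite{eigenvaluesoutside-silverstein2009} together with the continuous mapping theorem \cite{book-Prob95}; your explicit computation $\mathbb{E}\big[\sum_i\bh_i\bh_i\herm-\overline{\bh}_k\overline{\bh}_k\herm\big]=\overline{\bH}_k\overline{\bH}_k\herm+\sum_i\bR_i$ is exactly the bookkeeping that one-liner leaves implicit. If anything, you do slightly more than needed: since the denominator of \eqref{eq:sum rate stat} already contains an expectation and $\overline{\bg}_k$, $\overline{\bh}_k$ are deterministic, the SINR is deterministic and your moment identity gives ${\rm SE}_k^{\rm stat,S}=\overline{\rm SE}_k^{\rm stat,S}$ exactly, so the almost-sure machinery is only genuinely needed for the subsequent simplification \eqref{eq:R_stat DE simplified}.
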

Using the expression \eqref{eq:R stat DE}, we can now derive $\overline{\bg}_k$ by solving the SE optimization problem \eqref{P1}. Note that the problem \eqref{P1} is the sum of decoupled positive and increasing functions, therefore, a sufficient condition to solve it is to find $\forall$ $k$, $k=1,\dots,K$, $\overline{\bg}_k$ that satisfies: 
\begin{equation}
\overline{\bg}_k= {\underset{\overline{\bg}_k}{{\rm argmax}}} \ \frac{\overline{\bf g}_k\herm \left(\frac{\overline{\bh}_k\overline{\bh}\herm_k}{N}\right)\overline{\bf g}_k}
{\overline{\bf g}_k\herm\left(\frac{1}{N}\sum_{\substack{i=1}}^{K}  {\bf R}_i+\frac{1}{N}\overline{\bH}_k\overline{\bH}_k\herm+\frac{1}{\rho_d} \bI_N\right)\overline{\bf g}_k}. \ \tag{P1'}
\label{P:optimal g}
\end{equation}
It can be seen that \eqref{P:optimal g} is equivalent to Rayleigh quotient and thus admits the solution: 
\begin{equation}
\overline{\bg}_k= \left(\sum_{\substack{i=1}}^{K}  {\bf R}_i+\overline{\bH}_k\overline{\bH}_k\herm+\frac{N}{\rho_d} \bI_N\right)^{-1}\overline{\bh}_k 
\label{eq: g_k opt}
\end{equation} 
Consequently, 
 under assumption \ref{ass:asymptotic},
\begin{align}
\overline{\rm SE}_k&^{\rm stat,S}=
 \log\left[ 1+\frac{1}{N} \overline{\bh}\herm_k (\frac{1}{N} \overline{\bH}_k\overline{\bH}_k\herm + \frac{1}{\rho_d} \bI_N)^{-1} \overline{\bh}_k \right]+ \mathcal{O}\left(\frac{1}{N}\right). \label{eq:R_stat DE simplified}
\end{align}
Furthermore, under favorable propagation conditions:
\begin{align}
\overline{\rm SE}_k^{\rm stat,S}= 
 \log\left(1+ \frac{\rho_d}{N} \Vert\overline{\bh}_k\Vert^2\right)+ \mathcal{O}\left(\frac{1}{N}\right).
 \label{eq:R_stat fav}
 \end{align}
As can be seen from \eqref{eq:R_stat DE simplified}-\eqref{eq:R_stat fav}, the UL performances generated by $\overline{\bg}_k$ are mainly determined by the level of the specular component. That is, the proposed statistical processing is essentially beneficial in LoS-prevailing environments, thereby requiring a certain level of $\kappa$.
 
 \subsubsection{Comparative Analysis {(Case Study)}}\label{sec: Comparative Analysis}
As concluded above and will be illustrated in simulations, the statistical processing is convenient when the specular component is dominant over the scattered signals. For this reason, we aspire here to find a condition on the Rician factor $\kappa_k$ under which the proposed statistical processing outperforms the conventional processing. We examine a simple network setting where: $\frac{1}{N}\overline{\bh}_i\herm\overline{\bh}_j\asto  \frac{\beta_i \kappa_i}{1+\kappa_i} \delta_{ij}$, where $\delta_{ij}$ is the Kronecker delta. Specifically, we determine $\kappa_k$, $\forall k$, s.t.: 
{ \begin{align*}\label{P:rates comp}
 \log\left(1+  \frac{\rho_d \beta_k \kappa_k}{1+\kappa_k}\right)  
  \geq  \left(1-\frac{\tau}{T}\right) &\log\left[1+ \rho_d \left(\frac{1}{N}\tr \tilde{\bf R}_k + \frac{ \beta_k \kappa_k}{1+\kappa_k}\right)\right]
 \tag{P3}
\end{align*}}
\vspace{-1.5em}
\paragraph*{\bf Result}\label{cor:kappa_up} Taking into account that $\tau \in \left[K,T\right)$, $\frac{1}{N}\overline{\bh}_i\herm\overline{\bh}_j\asto  \frac{beta_i \kappa_i}{1+\kappa_i} \delta_{ij}$ and under Assumption \ref{ass:asymptotic}, it can be shown that the statistical processing outperforms the conventional channel-estimate based processing, if $\kappa_k$ verifies the sufficient condition\footnote{As shown in proof, for mathematical convenience, we consider a higher bound than it is necessary.}:
\begin{equation}\label{ineq:kappa cond2}
\kappa_k \geq \frac{\tr\boldsymbol{\Theta}_k}{N} \frac{T-K}{K}. 
\end{equation}
\begin{proof} A proof is given in Appendix \ref{app:kappa_up}.
\end{proof}
Furthermore, if the correlation matrix $\boldsymbol{\Theta}_k$ follows the widely used one-ring model \cite{correlation-Jakes1994} (introduced in the next section \eqref{eq:theta_k}) or the exponential correlation model\cite{Channel-Loyka2001}, then $\frac{1}{N}\tr\boldsymbol{\Theta}_k=1$, $\forall k$, therefore, \eqref{ineq:kappa cond2} writes :
$ \kappa_k \geq \frac{T-K}{K}. $ 
 These inequalities provide a lower bound on a sufficient Rician factor above which the statistical combing is more profitable than the conventional receiver. This bound is  a function of the systems parameters, including the coherence length and the number of users. Moreover, as can be seen, for a fixed $T$, $\frac{T-K}{K}$ 
 is a decreasing function of $K$. As a result, the higher is the number of users, the  smaller is the required $\kappa_k$ to enable the use of the proposed statistical receiver with better UL performance and as such, avoid training along with channel estimation and its associated errors. 
{ \vspace{-1.5em}
\section{Performance Analysis in a Multi-cell scenario}\label{sec:multi}
In this section, we extend our analysis of both conventional and statistical combining schemes to a multi-cell scenario. The objective is to examine the impact of inter-cell interference and pilot contamination on the overall performance of such systems. Therefore, we consider a multi-cell network with $L$ cells having each $K$ single-antenna users communicating with an $N-$antennas BS. In this line, we follow the same notations as in the single-cell scenario, except that we add a triple sub-script indication to differentiate the receiving BS from the cell where the UE is located. For example, $\bh_{j\ell k}$ represents the channel linking the $k-$th UE in cell $\ell$ to BS$_j$. Plus, $\bR_{j\ell k}$ refers to the correlation matrix of channel $\bh_{j\ell k}$, etc. 
Accordingly, the received signal at BS$_j$ is given by : 
\vspace{-0.5em}\begin{equation}
{\bf y}_j= \sqrt{{p}} {\sum_{\substack{\ell=1}}^{L}}\sum_{i=1}^K \bh_{j\ell i} \bx_{\ell i} + {\bf n}_j,
\end{equation}
where ${\bf n}_j$ represents a zero-mean additive Gaussian noise with variance $\sigma^2\bI_N$. Plus, we consider correlated Rician fading for intra-cell or local channels and correlated Rayleigh fading for channels from other cells. { This is a reasonable setting for inter-cell channels, owing to the longer distances between UEs and the BSs in other cells, that would likely include scatterers and thus significantly reduce the possibility of a Line-of-Sight transmission.} Specifically, the channel linking BS$_j$ to UE $k$ located in cell $\ell$ is modeled as:
\begin{align}
\bh_{j\ell k}= & \sqrt{\beta_{j\ell k}} \left(\sqrt{\frac{1}{1+\kappa_{jk}}}\boldsymbol{\Theta}_{j\ell k}^{\frac{1}{2}}\bz_{j\ell k} + \delta_{j \ell} \sqrt{\frac{\kappa_{jk}}{1+\kappa_{jk}}}\overline{\bz}_{jk}\right),
 \end{align}
 where $\delta_{j\ell}$ is the Kronecker delta, 
 and $\beta_{j\ell k }$ accounts for the large-scale fading. 
 Finally, to simplify the presentation of the results, let 
${\bf R}_{j \ell k} ={\frac{\beta_{j \ell k} }{1+\kappa_{jk}\delta_{j\ell}}}\boldsymbol{\Theta}_{j \ell k}, $
 and the aggregate matrix of the LoS components in cell $j$ denoted $\overline{\bH}_{j}=\left[\overline{\bh}_{j1} \overline{\bh}_{j2} \dots \overline{\bh}_{jK}\right]$, with 
$\overline{\bh}_{jk} =\sqrt{\frac{\beta_{jjk}\kappa_{jk}}{1+\kappa_{jk}}}\ \overline{\bz}_{jk}.$ \vspace{-1em}
{
\subsection{Conventional combining in Multi-Cell Systems}\vspace{-0.5em}
\par  Similarly to the single-cell scenario, to design the conventional receiver, we consider a pre-training phase of $\tau$ symbols in each cell. On the other hand, to account for pilot contamination, we assume that the same set of pilot sequences is reused in every cell. More specifically, the same pilot is assigned to every $k$-th UE in each cell, and as such $\forall (j,k)$, the estimates of the channels $\bh_{j 1 k}, \bh_{j2k},\dots,\bh_{jLk}$, will be correlated. Accordingly, using the MMSE estimation, the estimate of $\bh_{jj k}$ is given by\cite{book-Kay97}:  \vspace{-0.5em}
 \begin{equation}
{\wbh}_{jj k}= {\bf R}_{jj k } \bPhi_{jk} \left( \sum_{\substack{\ell'=1 }}^{L} \bh_{j \ell' k } + \frac{1}{\sqrt{\tau \rho_{tr}}} {\bf n}_{jk}^{tr}\right)+ \delta_{j \ell} \overline{\bh}_{jk},
\label{eq:h_hat M-MMSE}
\end{equation}
where
$\bPhi_{jk} =\left(\sum_{\ell' =1}^{L}{\bf R}_{j \ell' k} + \frac{1}{\tau \rho_{tr}}\bI_N\right)^{-1}.$ Therefore, ${\wbh}_{jj k}$ $\sim \mathcal{CN}\left(\overline{\bh}_{jk}, \tilde{\bR}_{jj k}\right)$, with $\tilde{\bf R}_{jj k}={\bf R}_{jj k} \bPhi_{j k}{\bf R}_{jj k}$. Plus, the estimation error $\boldsymbol{\xi}_{jj k}=\bh_{jj k}-\wbh_{jj k} $, follows the distribution $\boldsymbol{\xi}_{jj k}\sim \mathcal{CN}\left(0,\bR_{jj k}-\tilde{\bR}_{jj k}\right)$. 
Let $\bg_{jk}\in \mathbb{C}^{N\times 1}$ denote the conventional combining vector that BS$_j$ uses to process the signal sent by its UE $k$. This vector is given by \cite{HowMany-Jacob2013}:  
\begin{align}
&{\bg}_{jk} =  \left(\sum_{i=1}^{K}\wbh_{jji}\wbhh_{jji} + \bA_j + \frac{N}{\rho_d} \bI_N \right)^{-1}\wbh_{jjk}.\label{eq:G_MMSE_multi}
\end{align}
 where $\bA_j$ $\in\mathbb{C}^{N\times N}$ is an arbitrary hermitian positive semi-definite design parameter. For instance, it could contain the covariances of estimation errors and inter-cell interference as in \cite{HowMany-Jacob2013}. Therefore, we shall put:  $
\bA_j= \sum_{i=1}^{K}({\bf R}_{jji}-\tilde{\bf R}_{jji})+\sum_{\substack{\ell=1 \\ \ell\neq j}}^{L}\sum_{i=1}^{K}{\bf R}_{j \ell i}$.  Accordingly, the achievable SE corresponding to this transmission, ${\rm SE}_{jk}^{\rm conv,M}$, is defined as:
{\begin{align}\label{eq:gamma def}
&{\rm SE}^{\rm conv,M}_{jk}= 
\left(1-\frac{\tau}{T}\right) \mathbb{E} \left[\log\left(1+
\frac{\vert{\bf g}_{jk}\herm\wbh_{jjk}\vert^2}
{\mathbb{E}\left[{\bf g}_{jk}\herm\left({\sum_{\substack{(\ell,i)\neq (j, k)}} \bh_{j\ell i}\bh_{j\ell i}\herm } + \left.\boldsymbol{\xi}_{jk}\boldsymbol{\xi}_{jk}\herm+\frac{1}{\rho_d}\bI_N\right){\bf g}_{jk}\right|\wbH_{j}\right]}\right)\right].
\end{align}}
}
%
\begin{theorem}[Conventional combining in multi-cell systems]\label{th:multi-MMSE}
 Under Assumption \ref{ass:asymptotic}, we have : ${\rm SE}^{\rm conv,M}_{jk} - \overline{\rm SE}^{\rm conv,M}_{jk}\asto 0 $, such that:
{\vspace{-0.7em}\begin{align}
&\overline{\rm SE}^{\rm conv,M}_{jk} = 
\left(1-\frac{\tau}{T}\right)  \notag \\ & 
\log\Bigg[1+
\frac{\left|1- \frac{1}{\rho_d}\left[\overline{\bQ}_j\right]_{kk}\right|^2}
{
{\sum_{\substack{i=1\\ i\neq k}}^{K}}\left|\frac{1}{\rho_d}\left[\overline{\bQ}_j\right]_{ki}\right|^2
+ \frac{1}{\rho_d}([\overline{\bQ}_j]_{kk} -\frac{1}{\rho_d}\left[\overline{\bQ}_j^2\right]_{kk})
+{\sum_{\substack{\ell=1 \\ \ell \neq j}}^{L}} {\sum_{\substack{i=1}}^{K}}\left|\frac{1}{N} [\overline{\bQ}_j]_{ki}\tr(\bR_{j\ell i}\bPhi_{ji}{\bR}_{jj i})\right|^2 
}\Bigg]. \vspace{-2em} \label{eq:R MMSE DE multi}
\end{align}}
\vspace{-0.5em} with
$\overline{\bQ}_j = \left(\frac{1}{N}\overline{\bH}_j\herm\overline{\bH}_j+ diag\left\{ \frac{1}{N}\tr(\tilde{\bR}_{jji}
)\right\}_{i=1}^{K} +\frac{1}{\rho_d} \bI_K\right)^{-1}$, 
and $\overline{\bq}_{j,k}$ is the $k-th$ column of matrix $\overline{\bQ}_j$.  
\end{theorem}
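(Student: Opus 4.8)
The plan is to follow the template of Theorem~\ref{th: MMSE hat}, working conditionally on the estimated channels $\wbH_j$ (so that $\bg_{jk}$ is deterministic given $\wbH_j$) and exploiting the scale-invariance of the SINR to collapse the $N\times N$ resolvent in \eqref{eq:G_MMSE_multi} onto a $K\times K$ object. Writing ${\bf B}_j=\bA_j+\frac{N}{\rho_d}\bI_N$ for the deterministic part of the inverse and using the push-through (Woodbury) identity, I would express $\bg_{jk}={\bf B}_j^{-1}\wbH_j(\bI_K+{\bf F}_j)^{-1}{\bf e}_k$, where ${\bf F}_j=\wbHh_j{\bf B}_j^{-1}\wbH_j$ and ${\bf e}_k$ is the $k$-th canonical vector. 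Since $\Vert\bA_j\Vert_2$ is bounded under Assumption~\ref{ass:asymptotic}, one has ${\bf B}_j^{-1}=\frac{\rho_d}{N}\bI_N+\mathcal{O}(1/N^2)$, so the convergence-of-quadratic-forms lemma \cite{eigenvaluesoutside-silverstein2009}, applied entrywise to $\frac{1}{N}\wbHh_j\wbH_j$ (using $\wbh_{jji}\sim\mathcal{CN}(\overline\bh_{ji},\tilde{\bf R}_{jji})$ and the mutual independence of the Rayleigh parts), gives $\frac{1}{N}\wbHh_j\wbH_j\asto\frac{1}{N}\overline\bH_j\herm\overline\bH_j+\diag\{\frac{1}{N}\tr\tilde{\bf R}_{jji}\}_{i=1}^{K}$. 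Hence ${\bf F}_j\asto\rho_d(\overline\bQ_j^{-1}-\frac{1}{\rho_d}\bI_K)$ and $(\bI_K+{\bf F}_j)^{-1}\asto\frac{1}{\rho_d}\overline\bQ_j$, from which the signal term follows as $\bg_{jk}\herm\wbh_{jjk}=1-[(\bI_K+{\bf F}_j)^{-1}]_{kk}\asto 1-\frac{1}{\rho_d}[\overline\bQ_j]_{kk}$, reproducing the numerator of \eqref{eq:R MMSE DE multi}.

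For the denominator I would evaluate the inner conditional expectation $\mathbb{E}[\,\cdot\mid\wbH_j]$ term by term. The intra-cell interference ($\ell=j$, $i\neq k$) is handled exactly as in the single-cell case: decomposing $\bh_{jji}=\wbh_{jji}+\bxi_{jji}$ with $\bxi_{jji}$ orthogonal to $\wbH_j$, the error part contributes $\bg_{jk}\herm(\bR_{jji}-\tilde{\bf R}_{jji})\bg_{jk}=\mathcal{O}(1/N)$ and vanishes, while $\bg_{jk}\herm\wbh_{jji}=-[(\bI_K+{\bf F}_j)^{-1}]_{ki}\asto-\frac{1}{\rho_d}[\overline\bQ_j]_{ki}$, yielding $\sum_{i\neq k}\big|\frac{1}{\rho_d}[\overline\bQ_j]_{ki}\big|^2$. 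The noise/error term is obtained from $\frac{N}{\rho_d}\Vert\bg_{jk}\Vert^2$ by computing $\wbHh_j{\bf B}_j^{-2}\wbH_j\asto\frac{\rho_d^2}{N}(\overline\bQ_j^{-1}-\frac{1}{\rho_d}\bI_K)$ and using $\overline\bq_{j,k}=\overline\bQ_j{\bf e}_k$, which collapses via the resolvent identity to $\frac{1}{\rho_d}\big([\overline\bQ_j]_{kk}-\frac{1}{\rho_d}[\overline\bQ_j^2]_{kk}\big)$.

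The main obstacle is the inter-cell interference ($\ell\neq j$), i.e.\ the pilot-contamination term, because here $\bg_{jk}$ (built from $\wbh_{jji}$) and the interfering channel $\bh_{j\ell i}$ are statistically \emph{dependent}: they share the reused pilot through \eqref{eq:h_hat M-MMSE}, so $\bh_{j\ell i}$ cannot be treated as independent of the combiner. My plan is to isolate this dependence by writing $\wbh_{jji}-\overline\bh_{ji}=\bR_{jji}\bPhi_{ji}\big(\sum_{\ell'}\bh_{j\ell' i}+\frac{1}{\sqrt{\tau\rho_{tr}}}\bn_{ji}^{tr}\big)$ and observing that, among the $K$ columns of $\wbH_j$, only the $i$-th is correlated with $\bh_{j\ell i}$ (orthogonal pilots within a cell). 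Then $\bg_{jk}\herm\bh_{j\ell i}=\frac{1}{\rho_d}\overline\bq_{j,k}\herm\wbHh_j{\bf B}_j^{-1}\bh_{j\ell i}$, and the only surviving entry of $\wbHh_j{\bf B}_j^{-1}\bh_{j\ell i}$ is the $i$-th, governed by the trace-lemma limit $\frac{1}{N}\wbhh_{jji}\bh_{j\ell i}\asto\frac{1}{N}\tr(\bR_{jji}\bPhi_{ji}\bR_{j\ell i})$, giving $\bg_{jk}\herm\bh_{j\ell i}\asto\frac{1}{N}[\overline\bQ_j]_{ki}\tr(\bR_{j\ell i}\bPhi_{ji}\bR_{jji})$, while the component of $\bh_{j\ell i}$ independent of $\wbH_j$ contributes only an $\mathcal{O}(1/N)$ conditional variance. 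Summing over $\ell\neq j$ and $i$ produces the last denominator term of \eqref{eq:R MMSE DE multi}. Finally I would assemble the deterministic limits of numerator and denominator, invoke the continuous mapping theorem \cite{book-Prob95} to pass the almost-sure convergence through $\log(1+\cdot)$, and use dominated convergence to remove the outer expectation in \eqref{eq:gamma def}, concluding ${\rm SE}^{\rm conv,M}_{jk}-\overline{\rm SE}^{\rm conv,M}_{jk}\asto 0$.
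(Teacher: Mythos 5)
Your proposal is correct and follows essentially the same route as the paper: a Woodbury/push-through collapse of the $N\times N$ resolvent onto a $K\times K$ matrix, the law of large numbers applied entrywise to $\frac{1}{N}\wbHh_j\wbH_j$ together with the continuous mapping theorem, and, for the pilot-contamination term, exactly the correlation identity $\frac{1}{N}\wbhh_{jji}\bh_{j\ell i}\asto\frac{1}{N}\tr(\bR_{j\ell i}\bPhi_{ji}\bR_{jji})$ that the paper singles out in Appendix~\ref{app:MMSE multicell proof} as the only new ingredient beyond the single-cell argument of Appendix~\ref{app: conv LMMSE}. The only cosmetic difference is bookkeeping: the paper absorbs $\bA_j$ exactly into the resolvent via the matrix ${\bf Z}$, whereas you expand ${\bf B}_j^{-1}=\frac{\rho_d}{N}\bI_N+\mathcal{O}(1/N^2)$, which is equivalent under Assumption~\ref{ass:asymptotic}.
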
\vspace{-0.5em}
\begin{proof}
The proof is given in Appendix \ref{app:MMSE multicell proof}
\end{proof}
As can be seen from the SE approximation $\overline{\rm SE}^{\rm conv,M}_{jk}$\eqref{eq:R MMSE DE multi}, the achievable UL SE in the multi-cell setting has an analogous expression to the single-cell scenario \eqref{eq:R MMSE DE}, apart from the inter-cell interference represented by the last term of the denominator of $\overline{\rm SE}^{\rm conv,M}_{jk}$\eqref{eq:R MMSE DE multi}. As a result, most conclusions provided in Theorem \ref{th: MMSE hat} {\textit{(conventional combining in multi-cell systems)}} hold true in the multi-cell setup, including the cancellation of the estimation errors as $N$ grows large without bound. In addition, intra-cell interference is also generated by the inner products between the LoS components, and as such, dissipates under favorable propagation conditions. 
 {\par We now move on to investigating the effect of inter-cell interference on this combining approach.} In this line, by expanding the inter-cell interference approximation, $\overline{\rm SE}^{\rm conv,M}_{jk}$ writes: 
{
\begin{align}
&\overline{\rm SE}^{\rm conv,M}_{jk} = 
\left(1-\frac{\tau}{T}\right)  \notag \\ & 
\log\Bigg(1+ 
\frac{\displaystyle{\Big|\frac{\rho_d}{[\overline{\bQ}_j]_{kk}}-1\Big|^2}}
{
\displaystyle{\frac{\rho_d}{[\overline{\bQ}_j]_{kk}}-1 }
+\displaystyle{\sum_{\substack{\ell \neq j}}}\Big(\underbrace{\vert\frac{\rho_d}{N}\tr(\bR_{j\ell k}\bPhi_{jk}{\bR}_{jj k})\vert^2}_{\text{induced by pilot contamination}}
+\underbrace{\displaystyle{\sum_{\substack{i\neq k}}}\vert\frac{\rho_d}{N}\frac{[\overline{\bQ}_j]_{ki}}{[\overline{\bQ}_j]_{kk}}\tr(\bR_{j\ell i}\bPhi_{ji}{\bR}_{jj i})\vert^2}_{\text{uncorrelated interference}}
\Big) 
}\Bigg).\vspace{-2em} \label{eq:R MMSE DE multi simplified}
\end{align}
}
{Expression \eqref{eq:R MMSE DE multi simplified} separates the pilot contamination induced interference from the remaining inter-cell interference, which we refer to as ``uncorrelated" interference. Clearly, inter-cell interference limits the overall performances even at the infinite-antenna limit. Nevertheless, note that its impact can be alleviated through the mitigation of the uncorrelated interference by observing that this latter is eliminated when $\overline{\bQ}_j$ becomes diagonal. In fact, this is achieved in favorable propagation conditions, wherein the performances will attain:}
\begin{cor}[Favorable propagation in multi-cell]\label{cor:multi fav}
if $\frac{1}{N}\overline{\bh}_{ji}\herm\overline{\bh}_{jk}\asto 0$, for $i\neq k$, we have:
\begin{align}
&\overline{\rm SE}^{\rm conv,M}_{jk} = 
\left(1-\frac{\tau}{T}\right) 
\log\Bigg(1+ 
\frac{\left(\frac{\rho_d}{N}\tr \tilde{\bf R}_{jjk} +\frac{\rho_d}{N}\Vert\overline{\bh}_{jk}\Vert^2\right)^2}
{
\frac{\rho_d}{N}\tr \tilde{\bf R}_{jjk} +\frac{\rho_d}{N}\Vert\overline{\bh}_{jk}\Vert^2
+\displaystyle{\sum_{\substack{\ell \neq j}}^{L}}\underbrace{\left(\frac{\rho_d}{N}\tr(\bR_{j\ell k}\bPhi_{jk}{\bR}_{jj k})\right)^2}_{\text{induced by pilot contamination}}
}\Bigg).\vspace{-1em} \label{eq:R MMSE DE multi fav}
\end{align}
%
\end{cor}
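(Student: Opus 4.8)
The plan is to specialize the already-simplified expression \eqref{eq:R MMSE DE multi simplified}, exploiting the fact that favorable propagation forces the matrix $\overline{\bQ}_j$ to become asymptotically diagonal. First I would inspect the Gram matrix $\frac{1}{N}\overline{\bH}_j\herm\overline{\bH}_j$, whose $(k,i)$ entry is $\frac{1}{N}\overline{\bh}_{jk}\herm\overline{\bh}_{ji}$. By hypothesis every off-diagonal entry obeys $\frac{1}{N}\overline{\bh}_{jk}\herm\overline{\bh}_{ji}\asto 0$ for $i\neq k$, while the diagonal entries $\frac{1}{N}\Vert\overline{\bh}_{jk}\Vert^2$ stay bounded under Assumption \ref{ass:asymptotic}. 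Thus $\frac{1}{N}\overline{\bH}_j\herm\overline{\bH}_j$ converges entrywise to $\diag\{\frac{1}{N}\Vert\overline{\bh}_{jk}\Vert^2\}_{k=1}^{K}$, so that the matrix inverted in the definition of $\overline{\bQ}_j$ converges to the diagonal matrix with $k$-th entry $\frac{1}{N}\Vert\overline{\bh}_{jk}\Vert^2+\frac{1}{N}\tr\tilde{\bf R}_{jjk}+\frac{1}{\rho_d}$.

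Next, since this limiting matrix is positive definite (it is bounded below by $\frac{1}{\rho_d}\bI_K$), inversion is continuous at it, and the continuous mapping theorem yields that $\overline{\bQ}_j$ converges to the diagonal matrix whose $k$-th diagonal entry is $\big(\frac{1}{N}\Vert\overline{\bh}_{jk}\Vert^2+\frac{1}{N}\tr\tilde{\bf R}_{jjk}+\frac{1}{\rho_d}\big)^{-1}$; in particular $[\overline{\bQ}_j]_{ki}\asto 0$ for every $i\neq k$. I would then feed these limits into \eqref{eq:R MMSE DE multi simplified}. The ``uncorrelated interference'' double sum consists of finitely many terms, each carrying an off-diagonal factor $[\overline{\bQ}_j]_{ki}$ with $i\neq k$, and hence collapses to zero; the pilot-contamination sum involves no off-diagonal entries of $\overline{\bQ}_j$ and is therefore preserved verbatim.

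To finish, a one-line computation on the surviving diagonal entry gives $\frac{\rho_d}{[\overline{\bQ}_j]_{kk}}-1\asto\frac{\rho_d}{N}\tr\tilde{\bf R}_{jjk}+\frac{\rho_d}{N}\Vert\overline{\bh}_{jk}\Vert^2$, which reproduces both the numerator base and the leading denominator term of \eqref{eq:R MMSE DE multi fav}. Assembling these pieces and applying continuity of $\log(1+\cdot)$ (once more via the continuous mapping theorem) delivers the claimed expression.

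I expect the only delicate point to be the rigorous transfer of the entrywise almost-sure convergence of the Gram matrix through the matrix inverse. I would handle this by invoking continuity of inversion on the open cone of positive-definite matrices together with the uniform lower bound $\frac{1}{\rho_d}\bI_K$, which keeps the limiting matrix bounded away from singularity; the remainder is a routine substitution in which the surviving terms coincide term-by-term with \eqref{eq:R MMSE DE multi fav}.
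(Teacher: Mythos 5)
Your proof is correct and takes essentially the same route as the paper, which obtains the corollary directly from \eqref{eq:R MMSE DE multi simplified} by observing that favorable propagation renders $\overline{\bQ}_j$ asymptotically diagonal, so the uncorrelated-interference sum vanishes while $\frac{\rho_d}{[\overline{\bQ}_j]_{kk}}-1$ reduces to $\frac{\rho_d}{N}\tr\tilde{\bf R}_{jjk}+\frac{\rho_d}{N}\Vert\overline{\bh}_{jk}\Vert^2$ and the pilot-contamination term is untouched. Your careful handling of the Gram-matrix convergence and continuity of inversion simply makes rigorous the step the paper leaves implicit.
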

{Another important outcome from Theorem \ref{th:multi-MMSE} and corollary \ref{cor:multi fav} lies in the interplay between the interference emanating from pilot-contamination and the LoS signals. 
In fact, consider the quantity $\frac{1}{N}\tr({\sum_{\ell\neq j}^L}\bR_{j\ell k}\bPhi_{jk}{\bR}_{jj k})$ in \eqref{eq:R MMSE DE multi simplified} and \eqref{eq:R MMSE DE multi fav} which represents this type of correlated interference. As shown in the following proof, this term is a decreasing function of the Rician factor $\kappa_{jk}$. Actually, in the limiting case  $\kappa_{jk}\rightarrow \infty$, we have $\frac{1}{N}\tr({\sum_{\ell\neq j}^L}\bR_{j\ell k}\bPhi_{jk}{\bR}_{jj k})\rightarrow 0$. Consequently, we can state that in such multi-cell systems, another advantage of having stronger LoS components is to reduce the adverse effects of pilot contamination which is known to be a limiting performance factor in massive MIMO systems \cite{Noncooperative-Marzetta2010,HowMany-Jacob2013}. 
\begin{proof} The proof is given in Appendix \ref{app: pilot vs kappa}
\end{proof}\vspace{-1em}
}
\subsection{Statistical Combining in Multi-Cell Systems}\vspace{-0.5em}
In this section, we propose to investigate whether the statistical receiver defined in the single-cell scenario can still be beneficial in a multi-cell system with LoS-prevailing transmissions. In other words, we are interested in investigating the resilience of this combining scheme when it is subject to inter-cell interference. In this line, let $\overline{\bg}_{jk}$ indicate the statistical combining vector associated with the communication between UE $k$ and its BS $j$,  defined as:\begin{equation}
\overline{\bg}_{jk}= \left(\sum_{\substack{i=1}}^{K}  {\bf R}_{jji}+\overline{\bH}_{j,/k}\overline{\bH}_{j,/k}\herm+\frac{N}{\rho_d} \bI_N\right)^{-1}\overline{\bh}_{jk}, 
\label{eq:multi g_k stat}
\end{equation} 
where $\overline{\bH}_{j,/k}$ is matrix  $\overline{\bH}_{j}$ without the $k-$th column. 
Furthermore, since utilizing this receiver allows to circumvent training and estimation, the corresponding UL SE attains: 
\begin{equation}\label{eq:multi rate stat}
{\rm SE}_{jk}^{\rm stat, M}= \mathbb{E}\Bigg[\log\Bigg(1+ 
\frac{\vert\overline{\bf g}_{jk}\herm\overline{\bh}_{jk}\vert^2}
{\mathbb{E}\left[\overline{\bf g}_{jk}\herm\bigg({\sum_{\ell=1}^{K}\sum_{\substack{i=1}}^{K}}{\bh}_{j\ell i}{\bh}\herm_{j\ell i}-\overline{\bh}_{jk}\overline{\bh}_{jk}\herm+\frac{N}{\rho_d}\bI_N\bigg)\overline{\bf g}_{jk}\herm\right]}\Bigg) \Bigg].
\end{equation}
Next, we provide an asymptotic approximation of the achievable SE generated by $\overline{\bg}_{jk}$. This constitutes the last main result of this work. \vspace{-0.5em}
\begin{theorem} [Statistical combining in multi-cell systems]\label{th:multi statistical rate} Under Assumption \ref{ass:asymptotic},  ${\rm SE}_{jk}^{\rm stat,M} - \overline{\rm SE}_{jk}^{\rm stat,M}\asto 0$, with \vspace{-1em}
\begin{equation}\label{eq:multi R stat DE simplified}
\overline{\rm SE}_{jk}^{\rm stat,M} =  \log\Bigg[1+  \overline{\bh}_{jk}\herm \left(\overline{\bH}_{j,/k}\overline{\bH}_{j,/k}\herm+\frac{N}{\rho_d} \bI_N\right)^{-1}\overline{\bh}_{jk} \Bigg],
\end{equation}
\end{theorem}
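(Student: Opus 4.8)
The plan is to exploit that the statistical combiner $\overline{\bg}_{jk}$ in \eqref{eq:multi g_k stat} is \emph{deterministic}, depending only on the long-term quantities $\{\bR_{jji}\}_i$, $\overline{\bH}_{j,/k}$ and $\overline{\bh}_{jk}$. Hence the numerator $|\overline{\bg}_{jk}\herm\overline{\bh}_{jk}|^2$ and the inner expectation in the denominator of \eqref{eq:multi rate stat} are both deterministic, the outer expectation is vacuous, and the whole SINR is a deterministic function of $N$. First I would evaluate the inner expectation using $\bh_{j\ell i}\sim\mathcal{CN}(\delta_{j\ell}\overline{\bh}_{ji},\bR_{j\ell i})$, so that $\mathbb{E}[\bh_{j\ell i}\bh_{j\ell i}\herm]=\delta_{j\ell}\overline{\bh}_{ji}\overline{\bh}_{ji}\herm+\bR_{j\ell i}$. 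Summing over $(\ell,i)$, subtracting $\overline{\bh}_{jk}\overline{\bh}_{jk}\herm$ and adding $\tfrac{N}{\rho_d}\bI_N$ collapses the intra-cell specular terms to $\overline{\bH}_{j,/k}\overline{\bH}_{j,/k}\herm$ and leaves the aggregate correlation ${\bf S}:=\sum_{\ell=1}^{L}\sum_{i=1}^{K}\bR_{j\ell i}$. Setting ${\bf N}_{jk}:=\overline{\bH}_{j,/k}\overline{\bH}_{j,/k}\herm+\tfrac{N}{\rho_d}\bI_N$ and ${\bf M}_{jk}:=\sum_{i}\bR_{jji}+{\bf N}_{jk}$ (so that $\overline{\bg}_{jk}={\bf M}_{jk}^{-1}\overline{\bh}_{jk}$), the exact SINR becomes $(\overline{\bh}_{jk}\herm{\bf M}_{jk}^{-1}\overline{\bh}_{jk})^2/(\overline{\bh}_{jk}\herm{\bf M}_{jk}^{-1}({\bf N}_{jk}+{\bf S}){\bf M}_{jk}^{-1}\overline{\bh}_{jk})$.

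The second step is to separate scales. Under Assumption \ref{ass:asymptotic} each $\|\bR_{j\ell i}\|_2=\mathcal{O}(1)$, so $\|{\bf S}\|_2$ and $\|\sum_i\bR_{jji}\|_2$ are $\mathcal{O}(1)$; in contrast the specular energy obeys $\|\overline{\bh}_{jk}\|^2=\mathcal{O}(N)$ (consistent with the normalization $\|\overline{\bz}_{jk}\|^2=N$) and $\tfrac{N}{\rho_d}\bI_N$ grows linearly in $N$. Because ${\bf M}_{jk}\succeq\tfrac{N}{\rho_d}\bI_N$ and likewise ${\bf N}_{jk}$, both inverses satisfy $\|{\bf M}_{jk}^{-1}\|_2,\|{\bf N}_{jk}^{-1}\|_2\le\rho_d/N$; consequently $\|\overline{\bg}_{jk}\|=\|{\bf M}_{jk}^{-1}\overline{\bh}_{jk}\|=\mathcal{O}(1/\sqrt{N})$ while the target $\overline{\bh}_{jk}\herm{\bf N}_{jk}^{-1}\overline{\bh}_{jk}\le\rho_d\|\overline{\bh}_{jk}\|^2/N=\mathcal{O}(1)$ stays bounded.

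The heart of the argument is then three elementary perturbation bounds drawn from the resolvent identity ${\bf M}_{jk}^{-1}-{\bf N}_{jk}^{-1}=-{\bf N}_{jk}^{-1}(\sum_i\bR_{jji}){\bf M}_{jk}^{-1}$. (i) In the numerator, $\overline{\bh}_{jk}\herm{\bf M}_{jk}^{-1}\overline{\bh}_{jk}-\overline{\bh}_{jk}\herm{\bf N}_{jk}^{-1}\overline{\bh}_{jk}=-({\bf N}_{jk}^{-1}\overline{\bh}_{jk})\herm(\sum_i\bR_{jji})\overline{\bg}_{jk}$, bounded by $\|\sum_i\bR_{jji}\|_2\,\mathcal{O}(1/\sqrt{N})\,\mathcal{O}(1/\sqrt{N})=\mathcal{O}(1/N)$. (ii) In the denominator, the aggregate-correlation part is $\overline{\bg}_{jk}\herm{\bf S}\,\overline{\bg}_{jk}\le\|{\bf S}\|_2\|\overline{\bg}_{jk}\|^2=\mathcal{O}(1/N)$. (iii) Writing ${\bf N}_{jk}={\bf M}_{jk}-\sum_i\bR_{jji}$ gives $\overline{\bg}_{jk}\herm{\bf N}_{jk}\overline{\bg}_{jk}=\overline{\bh}_{jk}\herm{\bf M}_{jk}^{-1}\overline{\bh}_{jk}-\overline{\bg}_{jk}\herm(\sum_i\bR_{jji})\overline{\bg}_{jk}=\overline{\bh}_{jk}\herm{\bf M}_{jk}^{-1}\overline{\bh}_{jk}+\mathcal{O}(1/N)$. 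Combining (i)--(iii), both numerator and denominator of the SINR equal $\overline{\bh}_{jk}\herm{\bf N}_{jk}^{-1}\overline{\bh}_{jk}$ up to $\mathcal{O}(1/N)$, so the ratio converges to $\overline{\bh}_{jk}\herm{\bf N}_{jk}^{-1}\overline{\bh}_{jk}$; since this limit is bounded and nonnegative and $x\mapsto\log(1+x)$ is $1$-Lipschitz there, the continuous mapping theorem \cite{book-Prob95} delivers ${\rm SE}_{jk}^{\rm stat,M}-\overline{\rm SE}_{jk}^{\rm stat,M}\to0$, which is \eqref{eq:multi R stat DE simplified}.

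The main obstacle is bookkeeping rather than technical: the \emph{defined} combiner ${\bf M}_{jk}$ retains the intra-cell correlation $\sum_i\bR_{jji}$, whereas the target expression \eqref{eq:multi R stat DE simplified} has dropped \emph{every} correlation matrix, so one must certify that this mismatch together with the inter-cell aggregate ${\bf S}$ contributes only $\mathcal{O}(1/N)$; the resolvent estimates (i)--(iii) are precisely what guarantee this, and no deterministic-equivalent machinery beyond the norm bound $\|{\bf M}_{jk}^{-1}\|_2\le\rho_d/N$ is needed, which is why this statistical result is much lighter than its conventional-combining counterpart in Theorem \ref{th:multi-MMSE}.
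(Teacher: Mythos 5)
Your proof is correct, and it is in fact more complete than the paper's, whose entire argument for this theorem is one sentence: since $\overline{\bg}_{jk}$ is deterministic, ``a direct application of the convergence of quadratic forms lemma \cite{eigenvaluesoutside-silverstein2009} and the continuous mapping theorem \cite{book-Prob95}'' yields the result. The genuine difference is in the tools. The paper's citation suggests disposing of the Rayleigh contributions via the random quadratic-form lemma ($\frac{1}{N}\bz\herm\bA\bz\to\frac{1}{N}\tr\bA$ a.s.), whereas you observe that because the combiner is deterministic and the SINR in \eqref{eq:multi rate stat} already carries an inner expectation, the whole quantity is deterministic: computing $\mathbb{E}[\bh_{j\ell i}\bh_{j\ell i}\herm]=\delta_{j\ell}\overline{\bh}_{ji}\overline{\bh}_{ji}\herm+\bR_{j\ell i}$ (you also implicitly corrected the paper's typo $\sum_{\ell=1}^{K}$, which should run over the $L$ cells) and then running your three resolvent bounds with $\Vert{\bf M}_{jk}^{-1}\Vert_2\le\rho_d/N$ eliminates every correlation matrix at rate $\mathcal{O}(1/N)$, so no probabilistic lemma is needed at all. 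This purely deterministic route buys an explicit $\mathcal{O}(1/N)$ rate and makes the a.s. qualifier in the theorem vacuous, a point the paper's one-liner obscures; it also certifies the step the paper silently takes, namely that the intra-cell correlation $\sum_i\bR_{jji}$ retained in the definition \eqref{eq:multi g_k stat} of $\overline{\bg}_{jk}$, but absent from \eqref{eq:multi R stat DE simplified}, is asymptotically immaterial. Two minor caveats: your scaling $\Vert\overline{\bh}_{jk}\Vert^2=\mathcal{O}(N)$ (and, for the final ratio step, $\overline{\bh}_{jk}\herm{\bf N}_{jk}^{-1}\overline{\bh}_{jk}$ bounded away from zero when $\kappa_{jk}>0$) is the paper's standing LoS normalization rather than part of Assumption \ref{ass:asymptotic}, so it deserves to be stated as a hypothesis; and your sentence ``both numerator and denominator equal $\overline{\bh}_{jk}\herm{\bf N}_{jk}^{-1}\overline{\bh}_{jk}$ up to $\mathcal{O}(1/N)$'' is loosely worded, since the numerator equals the \emph{square} of that quantity up to $\mathcal{O}(1/N)$—the conclusion that the ratio tends to $\overline{\bh}_{jk}\herm{\bf N}_{jk}^{-1}\overline{\bh}_{jk}$ is nonetheless exactly right.
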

\begin{proof} Under assumption \ref{ass:asymptotic}, since $\overline{\bg}_{jk}$ is deterministic, a direct application of the convergence of quadratic forms lemma \cite{eigenvaluesoutside-silverstein2009} and the continuous mapping Theorem \cite{book-Prob95} yields $\overline{\rm SE}_{jk}^{\rm stat,M}$.
\end{proof}
First of all, we emphasize once again that the receiver, $\overline{\bg}_{jk}$, is purposely designed for environments with strong LoS components. Second, in such environments, comparing the above multi-cell SE $\overline{\rm SE}^{\rm stat, M}_{jk}$ \eqref{eq:multi R stat DE simplified} with the single-cell one $\overline{\rm SE}^{\rm stat, S}_k$ \eqref{eq:R_stat DE simplified}, reveals that employing the statistical combining scheme entails a similar asymptotic performance gain for both network settings, $i.e.$, $\overline{\rm SE}_{jk}^{\rm stat, M}- \overline{\rm SE}^{\rm stat, S}_k\asto 0$.
 Therefore, we can conclude that, for LoS-prevailing communications, inter-cell interference can be mitigated through the use of $\overline{\bg}_{jk}$ and as such, does not constitute a limitation of the achievable capacity when this processing approach is employed. This outcome is explained by the fact that the underlying premise behind the statistical receiver is to bypass training and thereby, prevent pilot-contamination and its ensuing undesirable effects. It is also important to note that this desirable feature comes in contrast to conventional combining, whose UL SE remains limited by pilot-contamination-induced interference, as previously shown in expressions \eqref{eq:R MMSE DE multi simplified}-\eqref{eq:R MMSE DE multi fav}, $i.e.$, $\overline{\rm SE}^{\rm conv, M}_{jk} \leq  \overline{\rm SE}^{\rm conv, S}_k$.  
\vspace{-1em}
\section{Numerical Results}\label{sec:Num Results}
In this section, we carry out MonteCarlo simulations over $1000$ channel realizations to validate, for finite system dimensions, the asymptotic results for both single-cell and mutli-cell settings, given in Sections \ref{sec:single} and \ref{sec:multi}.\vspace{-1em}
\subsection{Single-cell scenario}
For this scenario, we consider a single-cell massive MIMO having one BS with $N=150$ antennas, $K=20$ users, and a coherence length $T_c=500$ symbols. The inner cell-radius is $150$m and the users are uniformly distributed around the BS at an arrival angle $\theta_k$. Furthermore, the pathloss is given $ \beta_k =\frac{1}{x_k^\alpha}$, where $x_k$ is the distance between UE $k$ and the BS and $\alpha=2.5$. The specular component $\overline{\bz}_k$ follows the model $\left[\overline{\bf z}_k\right]_ n= e^{-j(n-1)\pi\sin\left(\theta_k\right)}$.  Moreover, to ensure distinct Rician factors among the users, we assume throughout all the simulations that $\forall k$, $ \kappa_k\sim \mathcal{U}\left[0,\kappa_{\rm max}\right]$. 
As a result, varying $\kappa_{\max}$ yields specular signals with different levels of strength which ultimately enables to epitomize both NLoS and LoS prevailing environments. Finally, the elements of the correlation matrix $\boldsymbol{\Theta}_k$ of channel $\bh_k$ are given by the one ring model \cite{correlation-Jakes1994}:
\begin{equation}\label{eq:theta_k}
\left[\boldsymbol{\Theta}_k\right]_{uv}=\frac{1}{\theta_{k,max}-\theta_{k,min}}  \int_{\theta_{k,min}}^{\theta_{k,max}} e^{j \frac{2\pi}{\lambda} a_{uv} \cos\left(\theta\right)} \mathrm{d}\theta,
\end{equation}
where $\lambda$ denotes the signal's wavelength and $a_{uv}$ is the distance between receive antennas $u$ and $v$. We also choose $\frac{a_{uv}}{\lambda} =0.5\vert u-v\vert$, $\theta_{k,min}=-\pi$ and $\theta_{k,max}=\theta_k-\pi$.
\begin{figure}[h]
\hspace{-0.7em}
\centering
\begin{subfigure}[b]{0.48\linewidth}
\includegraphics[width=\linewidth, height=\height]{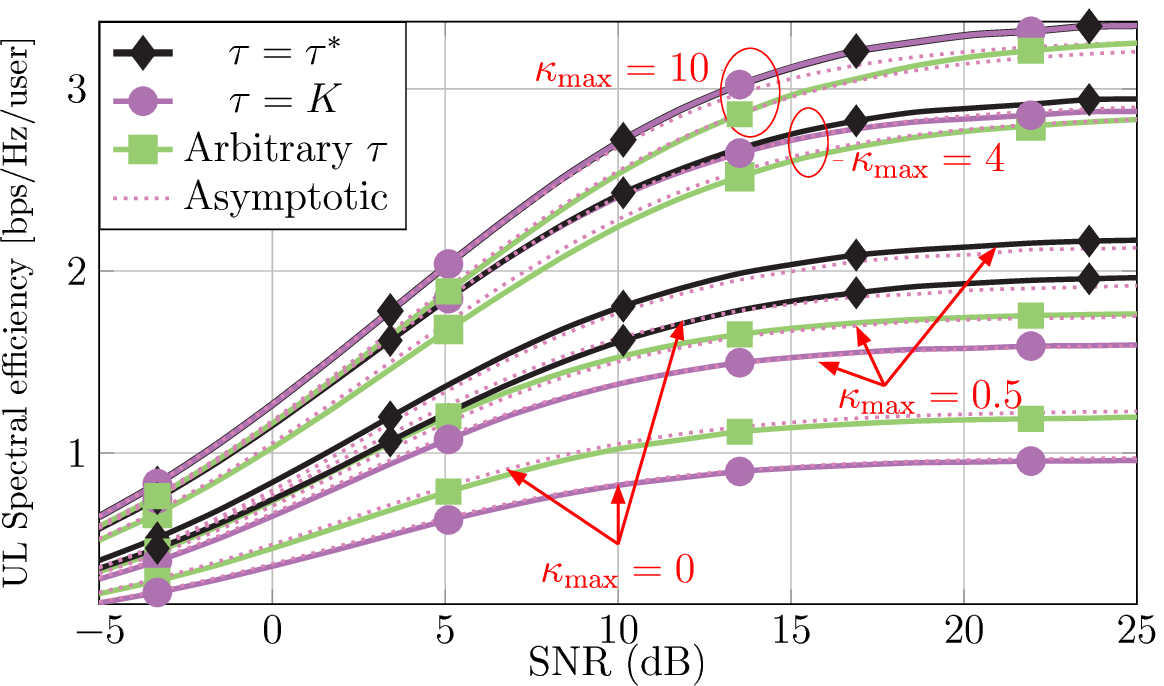}
\vspace{-2em}
\caption{}
\label{fig:Conven_diff_tau_kappa}
\end{subfigure}
\begin{subfigure}[b]{0.48\linewidth}
\includegraphics[width=\linewidth, height=\height]{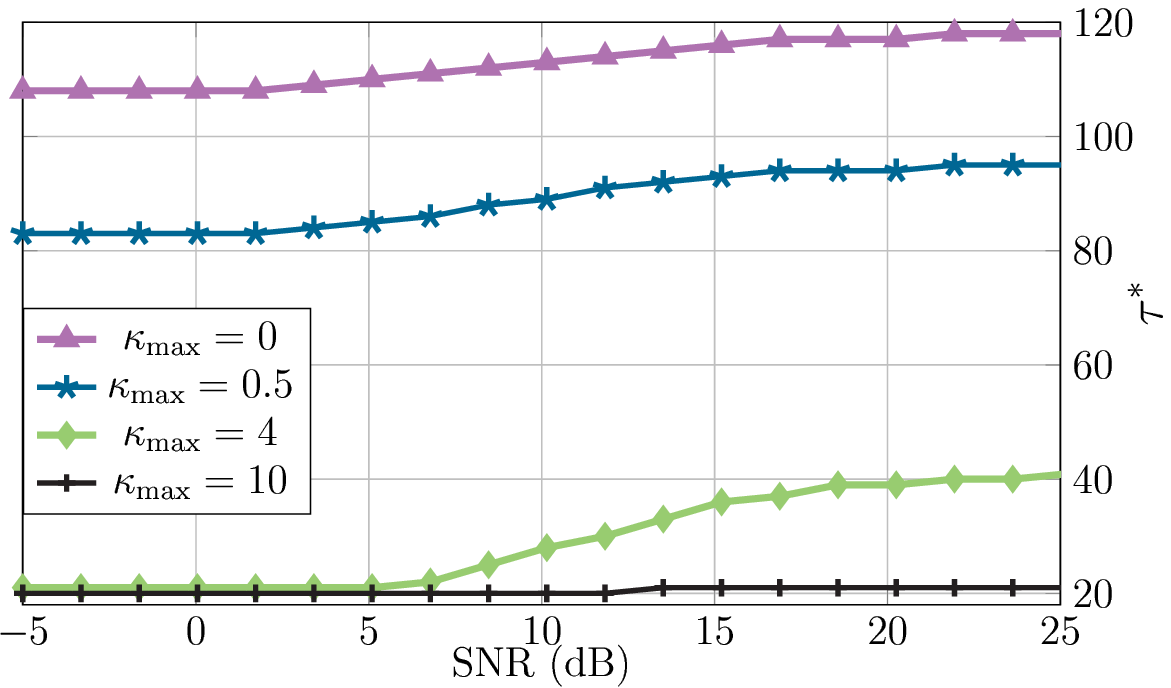}
\vspace{-2em}
\caption{}
\label{fig:tau_opt_vs_kappaz}
\end{subfigure} 
\caption{\small Single-cell setting: (a) Impact of $\tau$ on the UL SE using conventional combining \eqref{eq:G_MMSE_hat}, for different levels of Rician factor, s.t. $\kappa_{k}\sim \mathcal{U}(0,\kappa_{\rm max})$. 
(b) Optimal number of training symbols $\tau^*$ \eqref{eq:tau_opt fix} for different levels of $\kappa_k$.}\label{fig:Conv_SNR}
\end{figure}

\par We first illustrate the effects of the LoS presence and the length of the training sequence on the performances when using the conventional combining to, ultimately, validate the conclusions of Theorems \ref{th: MMSE hat} and \ref{th:optim}. To this end, we plot in Fig.\ref{fig:Conven_diff_tau_kappa} the achievable SE ${\rm SE}^{\rm conv,S}_k$ \eqref{eq:sum rate hat} for different levels of the Rician factors, including, $\kappa_{\rm max}= 0$ (corresponding to Rayleigh fading), $0.5$, $4$ and $10$.  Moreover, to manifest the importance of the number of training symbols, we represent in the same figure ${\rm SE}^{\rm conv,S}_k$ with different values of $\tau$, namely: the minimum $K$, the optimal $\tau^*$ \eqref{eq:tau_opt fix}, and another arbitrary value $(\neq \tau^*$ nor $K)$. Solid and dotted lines represent empirical and asymptotic SEs, respectively. In addition, for each scenario, we plot in Fig.\ref{fig:tau_opt_vs_kappaz} the obtained optimal values $\tau^*$ \eqref{eq:K tau opti}-\eqref{eq:tau_opt fix} with respect to the SNR for the various levels of $\kappa_{\rm max}$.

\par Overall, as expected, the LoS has a beneficial impact since increasing the Rician factor enables higher SEs, for any value of $\tau$. As to this latter, it can be seen that the best performances are clearly obtained when the optimal number of symbols given in Theorem \ref{th:optim} ($i.e.$ $\tau=\tau^*$) is considered (\textit{Fig.\ref{fig:Conven_diff_tau_kappa}, diamond-marked curves}). Furthermore, note that the gap between the settings $\tau=\tau^*$ and $\tau=K$ is particularly noteworthy at small values of Rician factors, (\textit{Fig.\ref{fig:Conven_diff_tau_kappa}, $\kappa_{\rm max}= 0 $ and $0.5$}, $i.e$ $0\leq\kappa_k\leq 0.5$, $\forall k$). However, this difference in performance becomes less significant as $\kappa_k$ takes higher values (\textit{Fig.\ref{fig:Conven_diff_tau_kappa}, curves $\kappa_{\rm max}=4$ and $10$, $i.e.$ $0\leq \kappa_k \leq 10$}). These simulation results confirm that as $\kappa$ takes higher values, $\tau^*\rightarrow K$ as also displayed in Fig.\ref{fig:tau_opt_vs_kappaz}. Indeed, Fig.\ref{fig:tau_opt_vs_kappaz} clearly asserts that, for small Rician factors ($\kappa_{max}=0$ and $0.5$), $\tau^*$ takes increasingly higher levels with the increase of the SNR. 	Conversly,  $\tau^*\rightarrow K$ when $\kappa_{\rm max}=$ $10$. Evidently, since $\tau^*\rightarrow K$ for these scenarios, the SEs corresponding to $\tau=K$ and $\tau=\tau^*$ are almost identical (\textit{Fig.\ref{fig:Conven_diff_tau_kappa}, overlapping diamond and circle-marked curves}); whereas, interestingly, for $\tau > K$ (represented by the square-marked curve), we observe a decrease in the SE. Consequently, the plots in Fig.\ref{fig:Conven_diff_tau_kappa} and Fig.\ref{fig:tau_opt_vs_kappaz} validate the conclusions of Section \ref{sec:Analysis} and the case study indicating the importance of assigning the optimal number of training symbols to attain the best performance. Additionally, as another important result, these simulations manifest that above a certain level of $\kappa_{\rm max}$, $i.e.$ as the LoS get stronger, investing in longer training sequences to enhance the spectral efficiency is actually counterproductive. This interesting outcome inspired the proposed statistical combining that is a more opportune approach in such environments, as was previously demonstrated in Section \ref{sec:single}, and is illustrated in what follows.
\begin{figure} 
\begin{subfigure}[b]{0.48\linewidth}
\includegraphics[width=1\linewidth, height=\height]{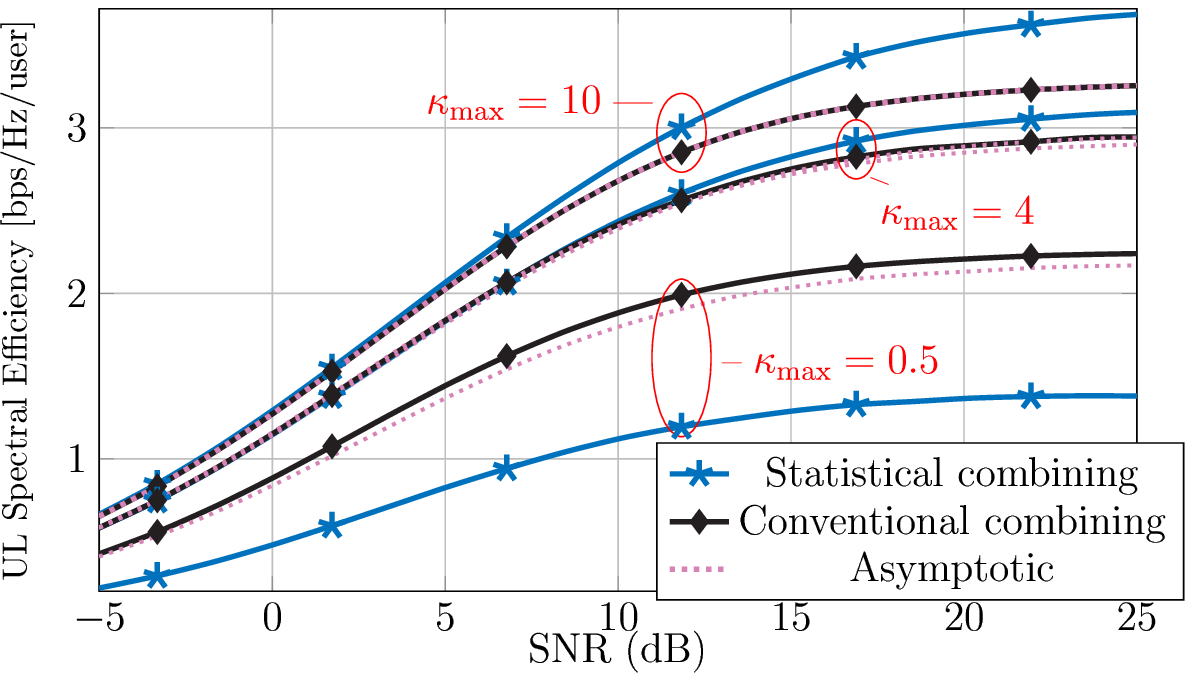}
\vspace{-2em}
\caption{\small Ordinary propagation}
\label{fig:Conv_Statistical_comparison_Ordin}
\end{subfigure}
\begin{subfigure}[b]{0.48\linewidth}
\includegraphics[width=1\linewidth, height=\height]{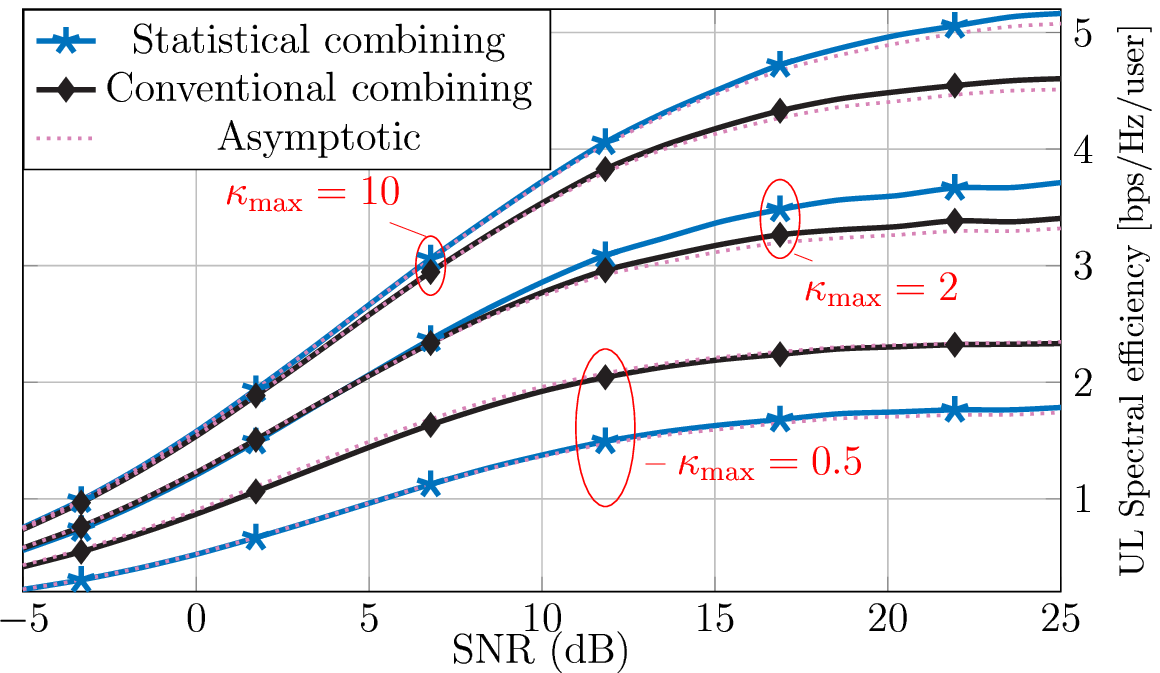}
\vspace{-2em}
\caption{\small Favorable propagation}
\label{fig:Conv_Stat_comp_Fav}
\end{subfigure}
\caption{\small Single-cell setting: UL SE using conventional combining \eqref{eq:G_MMSE_hat} with optimal training $\tau^*$ and statistical combining \eqref{eq: g_k opt} for different levels of $\kappa$, with (a) ordinary  and (b) favorable  propagation conditions.}
\vspace{-2em}
\label{fig:Conv_Stat}
\end{figure}
\par Under the aforementioned network setting and for different values of $\kappa$, we compare in Fig.\ref{fig:Conv_Stat} the UL SE ${\rm SE}^{\rm stat,S}_k$ \eqref{eq:sum rate stat} achieved using the statistical receiver with the one attained by the conventional technique ${\rm SE}^{\rm conv,S}_k$ \eqref{eq:sum rate hat} assuming optimal training, $i.e.$ $\tau=\tau^*$. Plus, we represent ordinary and favorable propagation conditions in Fig.\ref{fig:Conv_Statistical_comparison_Ordin} and Fig.\ref{fig:Conv_Stat_comp_Fav}, respectively. First, comparing Fig.\ref{fig:Conv_Statistical_comparison_Ordin} with Fig.\ref{fig:Conv_Stat_comp_Fav} reveals that favorable propagation enable better performances for both combining techniques. This consequence, as explained in Theorem \ref{th: MMSE hat}, is due to the cancellation of LoS induced intra-cell interference when the specular signals are mutually orthogonal. Second, as can be seen in both propagation conditions, conventional LMMSE is more beneficial than the statistical combiner at low ranges of the Rician factor, (\textit{Fig.\ref{fig:Conv_Stat}, $\kappa_{\rm max}=0.5$}). Nonetheless, as the LoS component becomes stronger, ${\rm SE}^{\rm stat,S}_k$ progressively approaches ${\rm SE}^{\rm conv,S}_k$, up to generating exceeding gains starting at $\kappa_{\rm max}=4$ in ordinary conditions, and $\kappa_{\rm max}=1.5$ in favorable propagation. {}This consequence can be justified by the expression $\overline{\rm SE}^{\rm stat,S}_k$ that clearly demonstrates that the statistical receiver's performance is mainly determined by the strength of the LoS components. Therefore, these results confirm our single-cell analysis by substantiating  the existence of a $\overline{\kappa}$ above which the statistical processing outperforms the conventional one. In the same line, this threshold value is fairly lower in favorable propagation compared to ordinary propagation environments. It is also important to note that  they extend the outcome in Section \ref{sec: Comparative Analysis} to a more realistic scenario that accounts for different per-user correlations and Rician factors. Finally, Fig.\ref{fig:Conv_SNR} and Fig.\ref{fig:Conv_Stat} validate, for finite system dimensions, the accuracy of the asymptotic approximations derived in Theorems \ref{th: MMSE hat}, \ref{th:optim} and \ref{th:statistical rate}. \vspace{-1em}
\subsection{Multi-cell scenario}
For the multi-cell scenario, we consider $L=3$ adjacent cells having the same parameters as defined in the single-cell section. Besides, for each cell, we consider cell-edge users as shown in Fig.\ref{fig:Netset}. Deploying the users in such a configuration generates high levels of inter-cell interference, and the close angles of arrival ensures considerable intra-LoS interference.
\hspace{-1.3em}\vspace{-1em}
\begin{figure}[h]
\scalebox{0.9}{
\begin{minipage}[c]{0.6\linewidth}
\centering
\includegraphics[width=1\linewidth, height=\height]{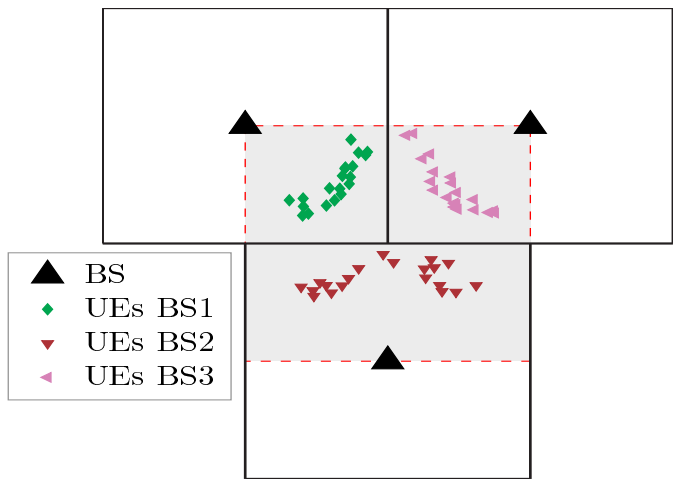}
\end{minipage}\hfill \hspace{0.5em}
\begin{minipage}[c]{0.4\textwidth}
    \caption{ Multi-cell network setup with L=3 cells and K=20 cell-edge users.}\label{fig:Netset}
\end{minipage}}
\end{figure}
\vspace{-1em}
\par We illustrate in Fig.\ref{fig:Multi_Conv_Stat} the achievable SEs for both statistical and conventional combining schemes considering pilot contamination and correlated Rician fading.  Fig.\ref{fig:Multi_conv_Stat_comp_ordin} and Fig.\ref{fig:Multi_conv_Stat_comp_fav} account for ordinary and favorable propagation conditions, respectively. In accordance with the discussion in section \ref{sec:multi}, these figures confirm that the multi-cell UL spectral efficiencies follow the same pattern as those observed for a single-cell system. That is, firstly, higher Rician factors entail increasingly better performances. Secondly, favorable propagation conditions further enhance the SE, due to the cancellation of the uncorrelated inter-cell interference for the conventional combining, and the intra-LoS interference for both receivers, as analytically demonstrated in Theorems \ref{th:multi-MMSE} and \ref{th:multi statistical rate}. \vspace{-1em}
\begin{figure}
\begin{subfigure}[b]{0.47\linewidth}
\includegraphics[width=1\linewidth, height=\height]{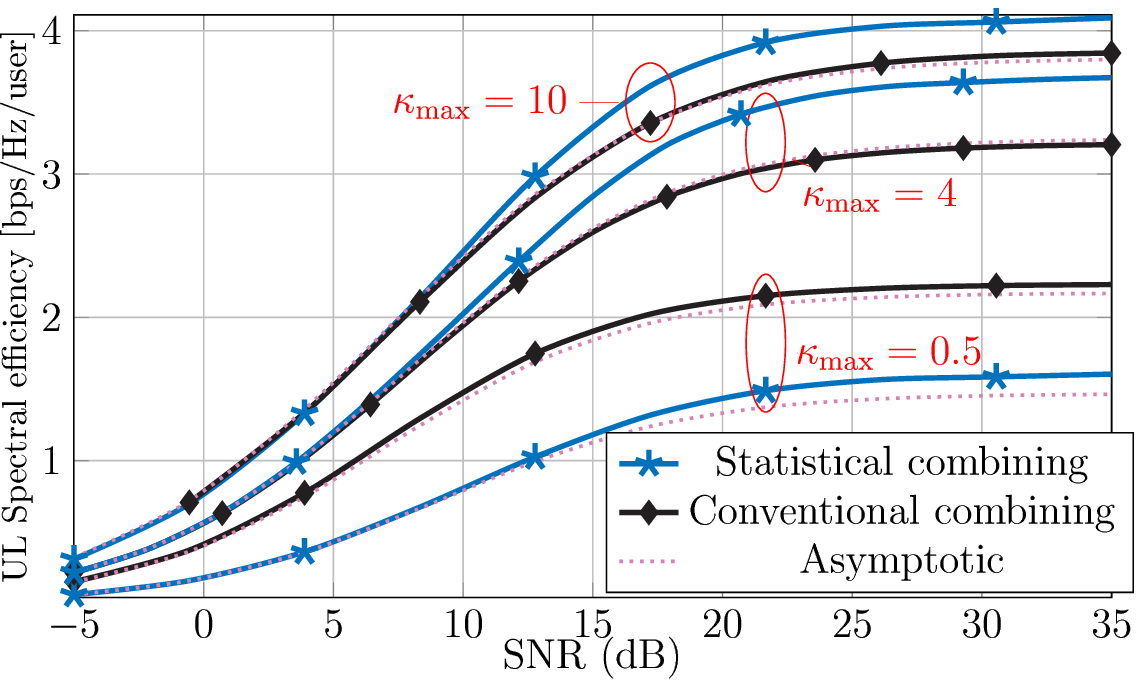}
\vspace{-2em}
\caption{\small Ordinary propagation}
\label{fig:Multi_conv_Stat_comp_ordin}
\end{subfigure}
\hspace{0.9em}
\begin{subfigure}[b]{0.47\linewidth}
\includegraphics[width=1\linewidth, height=\height]{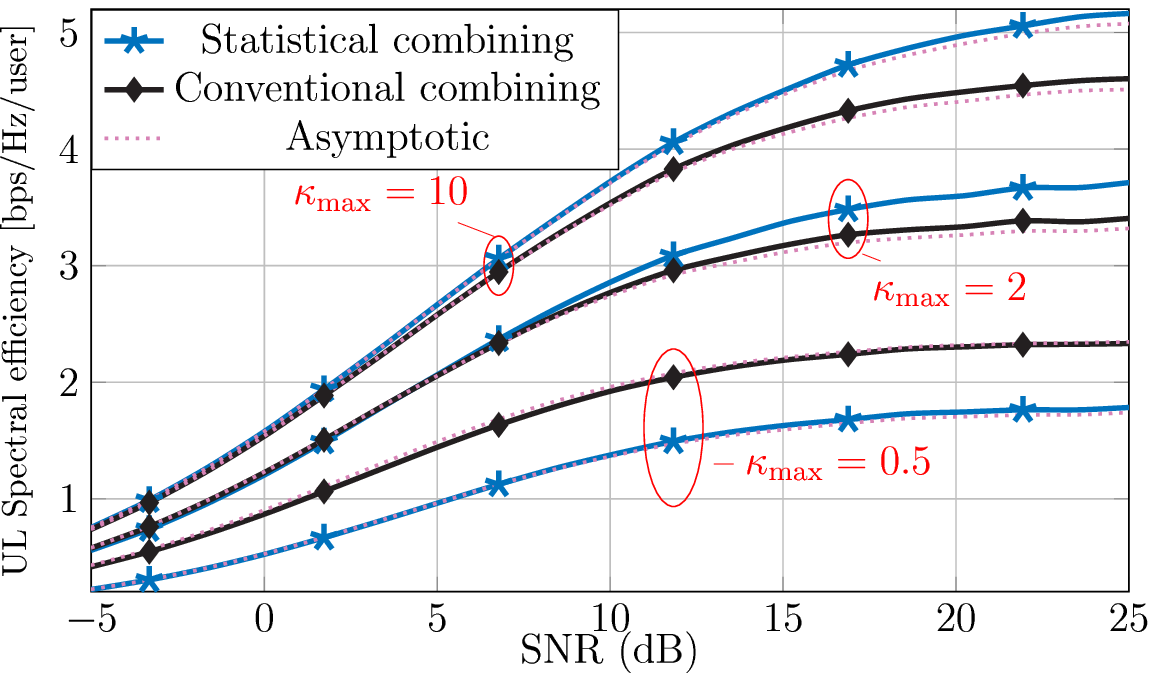}
\vspace{-2em}
\caption{\small Favorable propagation}
\label{fig:Multi_conv_Stat_comp_fav}
\end{subfigure}
\caption{\small  Multi-Cell setting: UL SE using multi-cell conventional combining \eqref{eq:G_MMSE_multi} with $\tau=K$ and statistical processing \eqref{eq:multi g_k stat} with different levels of Rician factor, in (a) ordinary and (b) favorable propagation conditions.
}
\vspace{-1.5em}
\label{fig:Multi_Conv_Stat}
\end{figure}
\begin{figure}[h]
\begin{minipage}[c]{0.6\linewidth}
\includegraphics[width=0.9\linewidth, height=\height]{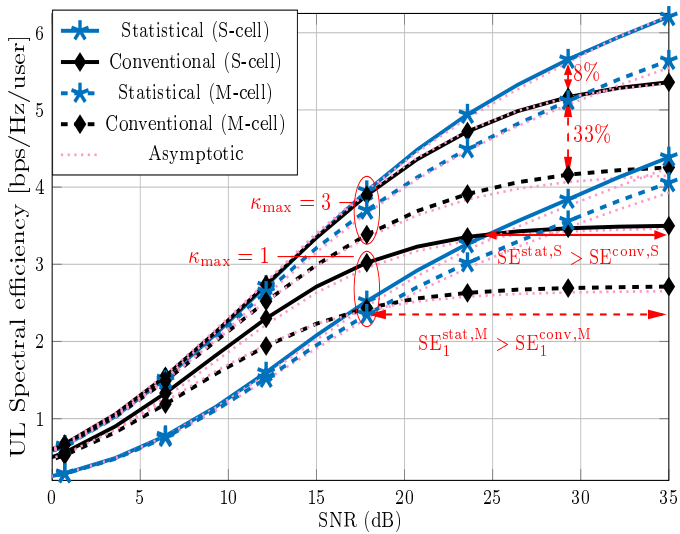}
\end{minipage}\hspace{-1em}
\begin{minipage}[c]{0.42\linewidth}
\vspace{-1.5em}
\caption{\small UL SEs of conventional and statistical combining in both single-cell $(L=1)$ and multi-cell $(L=3)$ settings, for different levels of Rician factor, and $K=20$ cell-edge users. Solid and dashed lines (and arrows) correspond to the single-cell and multi-cell cases, respectively. Dotted lines represent the asymptotic approximations given in Theorems \ref{th: MMSE hat} - \ref{th:multi statistical rate}.}\label{fig:Multi_Single_comp}
\end{minipage}
\vspace{-2 em}
\end{figure}
\par Next, to highlight the impact of inter-cell interference on the performance of the receivers, we consider cell $1$ from the network setup in Fig.\ref{fig:Netset} as a cell of interest, and propose to compare its achievable UL SE in the cases where it is deployed: 
\hspace{3em}
\begin{itemize}
\item[(1)] in the multi-cell setting of Fig.\ref{fig:Netset},
\item[(2)] in a single-cell setting having the same system and channel parameters as in (1). 
\end{itemize}
Accordingly, we represent in Fig.\ref{fig:Multi_Single_comp} the SEs corresponding to these cases for different levels of $\kappa_{\rm max}$. Dashed and  solid lines (and arrows) correspond to cases (1) and (2), respectively.  As indicated for $\kappa_{\rm max}= 2$, in the multi-cell setting, the statistical combining achieves a $33\%$ SE gain over the conventional one; whereas in the single-cell case, the observed increase is by $8\%$ only. As for smaller Rician factors ($i.e.$, $\kappa_{\rm max}=1$), we can see that in the multi-cell plot, the statistical receiver outperforms the conventional one for a lower SNR (starting $17$ dB) than it is the case for the single-cell scheme, wherein this is only achieved for SNRs above $25$dB.  
Consequently, Fig.\ref{fig:Multi_Single_comp} validates that, in the infinite antenna-limit, compared to conventional combining, employing the statistical receiver works even better in a multi-cell network. As demonstrated in Section\ref{sec:multi}, this is explained by the fact that since it mitigates the inter-cell interference, this processing technique actually engenders a similar multi-cell SE, ${\rm SE}^{\rm stat, M}$, to when used in a single-cell system, ${\rm SE}^{\rm stat,S}_k$. To summarize, Fig.\ref{fig:Multi_Single_comp} asserts that for the same cell $j$: $\overline{\rm SE}^{\rm stat, S}_k-\overline{\rm SE}^{\rm stat,M}_{jk}\asto 0$, whereas, due to pilot contamination, $\overline{\rm SE}^{\rm conv , S}_k\geq \overline{\rm SE}^{\rm conv, M}_{jk}$. Nonetheless, note that the gap observed in Fig.\ref{fig:Multi_Single_comp} between ${\rm SE}^{\rm stat, S}_k$ and ${\rm SE}^{\rm stat, S}_k$ is due to the finite system dimension considered in the simulations which will be reduced as $N$ grows larger.
\vspace{-0.5em}
\section{Conclusion}\label{sec:Conclusion}
We studied in this work the UL performances of single and multi-cell massive MIMO systems underlying spatially correlated Rician channels, with the assumption of imperfect channel estimates. Considering the large-antenna limit, we derived closed-form approximations of the spectral efficiencies achieved by the LMMSE conventional receiver and proposed a novel statistical combining scheme. For the former, the approximations were exploited to determine an explicit expression of the optimal number of training symbols which was shown to be particularly important for small Rician factors. Conversely, the study reveals that, in LoS-prevailing environments, investing in longer training sequences to enhance the SE is ineffective. This result, led us to propose the statistical receiver that is more beneficial for systems with strong LoS components. The multi-cell analysis unveiled that conventional processing is limited by pilot contamination, even under favorable propagation; yet, it demonstrated that stronger LoS signals reduce this correlated interference. On another note, the asymptotic derivations indicated that statistical combining allows to mitigate inter-cell interference, and as such, outperforms the conventional receiver in a multi-cell system to an even higher extent.  Finally, the approximations given in this work can be applied for realistic scenarios involving different correlation matrix models, Rician factors, CSI errors. In essence, they provide a general framework that can be harnessed to perform further analysis of similar networks.}

%
%
%
%
%
%
%
%
%
%

\vspace{-1 em}
\begin{appendices}
 \section{Proof of Theorem \ref{th: MMSE hat}}\label{app: conv LMMSE}
We demonstrate in this section the results of Theorem \ref{th: MMSE hat}. As the derivations rely most often on the same arguments, we  mention the pertinent steps to derive the asymptotic approximation \eqref{eq:R MMSE DE}. 
First, define $\tilde{\bf Q}=\left(\frac{1}{N}\wbHh {\bf Z} \wbH+\frac{1}{\rho_d}\bI_K\right)^{-1}
$, with ${\bf Z}^{-1}= \frac{\rho_d}{N}\sum_{i=1}^{K} \left({\bf R}_i-\tilde{\bf R}_i\right)+ \bI_N$. 
Second, using the Woodbury matrix identity enables to express all the signals constituting ${\rm SE}^{\rm conv,S}_k$ in terms of the elements of matrix $\tilde{\bQ}$. For instance, the signal term $\vert{\bf g}_k\herm\wbh_k\vert^2$, can be written as:
{\small \begin{equation}\label{eq:sig simp}
\left| \frac{1}{N} \wbhh_k \left(\frac{\wbH\wbH\herm}{N}+ \frac{1}{N}\sum_{i=1}^{K} \left({\bf R}_i-\tilde{\bf R}_i\right)+\frac{1}{\rho_d}\bI_N\right)^{-1}\wbh_k\right|^2= \left|1-\frac{1}{\rho_d} [\tilde{\bf Q}]_{kk}\right|^2.
\end{equation}}  
Accordingly, ${\rm SE}^{\rm conv,S}_k$ in \eqref{eq:sum rate hat} can be rewritten as follows: 
{\small \begin{align}
&{\rm SE}^{\rm conv,S}_k =  \notag \\ &
\left(1-\frac{\tau}{T}\right)\log\left(1+ 
\frac{\left|1- \frac{1}{\rho_d}\left[\tilde{\bQ}\right]_{kk}\right|^2}{ {\sum_{\substack{i=1\\ i\neq k}}^{K}}\left|\frac{1}{\rho_d}\left[\tilde{\bQ}\right]_{ki}\right|^2
+ \frac{1}{N^2}\sum_{i=1}^{K}\tilde{\bq}_k\herm \wbHh{\bf \bxi}_i{\bf \bxi}_i\herm\wbH\tilde{\bq}_k
+ \frac{1}{\rho_d}\left(\left[\tilde{\bQ}\right]_{kk} -\frac{1}{\rho_d} \left[\tilde{\bQ}^2\right]_{kk}\right)
 }\right). \label{eq:sum-rate Q tilde}
\end{align}}
In fact, putting the spectral efficiency ${\rm SE}^{\rm conv,S}_k$ in this form \eqref{eq:sum-rate Q tilde} facilitates the derivation of the deterministic equivalent of this latter since we can simply use the LLN as follows: 
\textit{\begin{itemize}
\item  Under assumption \ref{ass:asymptotic}, the LLN allows us to put $\frac{1}{N} \left[{\wbHh\wbH}\right]_{ij}-\frac{1}{N}\mathbb{E} \left[\wbhh_i\wbh_j\right] \asto 0. 
$\\
Therefore, under Assumption \ref{ass:asymptotic}, using the continuous mapping theorem \cite{book-Prob95}, we have :
 \begin{equation}\label{eq:QtildeQ}
 [\tilde{\bQ}]_{ij} - \left[{\bQ}\right]_{ij} \asto 0,
 \end{equation}\vspace{-0.7em}where the matrix $\bQ$ is given in \eqref{eq:Q_hat}.
\end{itemize}}
Thus, a direct application of \eqref{eq:QtildeQ} with the continuous mapping theorem \cite{book-Prob95} enables us to find asymptotic approximations of most of the terms in \eqref{eq:sum-rate Q tilde}. For instance, for the signal term, we find: 
 $\left|1- \frac{1}{\rho_d}\left[\tilde{\bQ}\right]_{kk}\right|^2 - \left|1-\frac{1}{\rho_d} [\bQ]_{kk}\right|^2\asto 0$. Likewise, the same steps allow to derive approximations of the intra-cell interference term and processed noise.
 { As to the estimation error term $\mathcal{E}_k=\frac{1}{N^2}\sum_{i=1}^{K}\tilde{\bq}_k\herm \wbHh{\bf \bxi}_i{\bf \bxi}_i\herm\wbH\tilde{\bq}_k$, we mainly adopt the same reasoning except for the following step. Indeed, in order to find a deterministic equivalent for $\mathcal{E}_k$, we first exploit the orthogonality property of LMMSE channel estimation by observing that $\forall\{k,i\}$, $\wbh_k$ and ${\bf \bxi}_i$ are independent. 
After that, applying the convergence of quadratic forms lemma \cite{eigenvaluesoutside-silverstein2009} yields: 
{\small \begin{equation}
\mathcal{E}_k-\frac{1}{N^2}\sum_{i=1}^{K} \bq_k\herm \left(\overline{\bH}\herm\left(\frac{1}{\tau \rho_{tr}}{\bf R}_i\bPhi_i\right)\overline{\bH} + diag\left\{ \tr\left(\tilde{\bf R}_\ell\frac{1}{\tau \rho_{tr}}{\bf R}_i\bPhi_i\right)\right\}_{\ell=1}^K \right)\bq_k\asto 0.
\end{equation}}}
Finally, putting all the above terms together yields the asymptotic approximation of the spectral efficiency in Theorem \ref{th: MMSE hat} and as such, concludes the proof. \vspace{-1.1em}
\section{Proof of Theorem \ref{th:optim}}\label{app:tau_opt}
Denote $\partial_\tau F$ and $\partial^2_\tau F $ as the first and second derivatives of any function $F(\tau)$ with respect to $\tau$. To prove the results of Theorem \ref{th:optim}, we use the following approach: 
First, we show that $\forall \tau $, $\partial_\tau{\rm SE}^{\rm conv}$ is monotonically decreasing and that $\exists \tau_0 $, such that $\partial_\tau{\rm SE}^{\rm conv}\vert_{\tau=\tau_0}=0$. Therefore, $\tau_0$ is unique and $\tau_0=\underset{\tau} {\rm argmax} \ \overline{\rm SE}^{\rm conv}$. Accordingly, if this step is verified, considering the constraint in \eqref{P:tau opt}, finding $\tau^*$ is simply obtained as:
\begin{itemize}
\item If $\tau_0 \leq K$, then 
 $\tau^*=K$, which is depicted by the first solution \eqref{eq:K tau opti}.
\item  On the other hand, if $\tau_0 \in [K,T[$, then $\tau^*=\tau_0$ which is represented by the solution \eqref{eq:tau_opt fix}.
\end{itemize} 
\paragraph*{\bf Proof that $\forall \tau$, $\partial_\tau{\rm SE}^{\rm conv}$ is monotonically decreasing}
 To establish this, a sufficient condition is to have: $ \forall \tau \in [K,T[$, $\partial^2_\tau {\rm SE}^{\rm conv,S}_k < 0$. 
In this line, using \eqref{eq:R MMSE DE simplified}, we have :  
{ \begin{align}
&\partial_\tau \overline{\rm SE}^{\rm conv} = \frac{1}{K} \sum_{k=1}^{K} \left[\left(1-\frac{\tau}{T}\right) \frac{\gamma'_k (\tau)}{1+\gamma_k(\tau)}-\frac{1}{T}\log\left(1+\gamma_k (\tau)\right)\right],\label{eq:R_prime} \\
&\partial^2_\tau \overline{\rm SE}^{\rm conv} = \frac{1}{K} \sum_{k=1}^{K} \left[-\frac{2}{T} \frac{\gamma'_k (\tau)}{1+\gamma_k(\tau)}+\left(1-\frac{\tau}{T}\right) \frac{\gamma''_k(\tau) \left(1+\gamma_k(\tau)\right) - \left(\gamma'_k(\tau)\right)^2}{\left(1+\gamma_k(\tau)\right)^2}\right],\label{eq:R_seconde} 
\end{align}}
with $\gamma'_k (\tau)$ and $\gamma''_k (\tau)$ given by : \vspace{-0.5em}
{\small \begin{align}
&\gamma'_k (\tau) = \rho_d  \frac{\bq\herm_{k} \overline{\bf D}_2 \bq_k}{ \left(\left[\bQ\right]_{kk}\right)^2}, \label{eq:gamma_prime}\\
&\gamma''_k (\tau) =-2 \rho_d  \frac{ \bq_k\herm \overline{\bf D}_2 \left(\bQ \left[\bQ\right]_{kk} -\bq_k\bq_k\herm \right)\overline{\bf D}_2 \bq_k - \left[\bQ\right]_{kk} \bq\herm \overline{\bf D}_3 \bq_k}{ \left(\left[\bQ\right]_{kk}\right)^3}
\label{eq:gamma_seconde},
\end{align}}
 and 
$ \overline{\bf D}_\alpha=\frac{\left(-1\right)^\alpha}{\rho_{tr} \tau^\alpha}diag\left\{\frac{1}{N}\tr\left({\bf R}_\ell^\alpha \bPhi^\alpha_\ell \right) \right\}_{\ell =1}^{K}$,
with $\alpha$ being an integer. Note that $\overline{\bf D}_\alpha$ is a positive semi-definite matrix for all even values of $\alpha$, and negative semi-definite otherwise. 
 With this in mind, from \eqref{eq:gamma_prime}, we can see that $\gamma'_k\left(\tau\right) \geq 0$, $\forall \tau\in\left[K,T\right[$. Plus, since $\forall \tau$, $\gamma_k(\tau)$ is evidently positive, it can be seen from \eqref{eq:R_seconde} that $\gamma''_k (\tau)\leq 0$, $\forall \tau\in\left[K,T\right[$ is a sufficient condition to obtain : $\partial^2_\tau \overline{\rm SE}^{\rm conv}\leq0$. With this in mind, using the fact 
 that ${\bf a}\herm{\bf a} \bI_M - {\bf a}{\bf a}\herm$ is a positive semi-definite matrix, we can easily show that $\gamma''_k(\tau)\leq 0$, $\forall \tau$. This concludes the proof that  $\partial_\tau{\rm SE}^{\rm conv}$ is monotonically decreasing with respect to $\tau$, and validates the results given in Theorem \ref{th:optim}.
{ \vspace{-1.2em}
\section{Proof of Corollary \ref{cor:kappa_up}}\label{app:kappa_up}
For simplicity, note that the index ``$k$'' will be dropped in the sequel. Accordingly, denoting $\alpha_i$ the $i-$th eigenvalue of $\boldsymbol{\Theta}$, our objective is to find $\kappa$ such that : 
\begin{align*}\label{ineq:rates comp}
\left(1-\frac{\tau}{T}\right)&\log\left[1+\frac{\beta\rho_d}{1+\kappa}\left(\frac{1}{N} \sum_{i=1}^{N}\frac{{\alpha_i}^2}{\alpha_i+\frac{1+\kappa}{\rho_{tr} \beta \tau}} + \kappa\right)\right]  
\leq \tag{P3'} \log\left(1+\rho_d\beta \frac{\kappa}{1+\kappa}\right). 
\end{align*}
Since ``$\log$" is an increasing function and $\alpha_i \geq 0$, $\forall i$, we consider the upper bound: 
$\frac{\alpha_i^2}{\alpha_i+\frac{1+\kappa}{\tau\rho_{tr}}} \leq \alpha_i, $ for all positive values of $\kappa$, $\tau$ and $\rho_{tr}$. Therefore, \eqref{ineq:rates comp} is satisfied whenever $\kappa$ verifies:
\begin{equation}\label{eq:kappa_cond_appendix}
\left(1-\frac{\tau}{T}\right)\log\left[1+\frac{\beta\rho_d}{1+\kappa}\left(\frac{1}{N} \tr\boldsymbol{\Theta} + \kappa\right)\right] \leq \log\left(1+\rho_d\beta \frac{\kappa}{1+\kappa}\right).
\end{equation}
 Applying $``\exp"$ on both sides of \eqref{eq:kappa_cond_appendix} yields the lower bound $\kappa\geq f\left(\kappa\right)$, with: 
 \begin{equation}\label{eq:f_kappa}
 f\left(\kappa\right) =
-1+\frac{ \frac{1}{N}\tr \boldsymbol{\Theta}}{-\frac{\kappa}{\kappa+1} +\frac{1}{\beta\rho_d}\left(-1+\left(1+\beta \rho_d \frac{\kappa}{1+\kappa}\right)^{\frac{T}{T-\tau}}\right)}.
 \end{equation}
 Next, we use the following result : \textit{Let $x$, $\alpha$ $\in \mathbb{R}$, and consider $y(x)$, a positive increasing function of all $x\geq 0$ with $y(0)=0$. If $\alpha > 1$, the following inequality holds:} \vspace{-1em}
\begin{equation}\label{eq:ineq}
\left(1+y(x)\right)^\alpha \geq \left(1+\alpha y(x)\right),
\end{equation}
 \begin{rem}
This result can be proved by showing that the function $g(x)=\left(1+y(x)\right)^\alpha - \left(1+\alpha y(x)\right)$ verifies, $\forall x\geq 0$:  $g(0)=0$ and $g'(x)\geq 0$, thus yielding: $g(x)\geq 0$, $\forall x\geq 0$.
 \end{rem}
 Accordingly, applying \eqref{eq:ineq} on \eqref{eq:f_kappa} allows us to obtain the following condition on $\kappa$:  
$$\kappa \geq -1+ \frac{\tr\boldsymbol{\Theta}}{N} \frac{\left(T-\tau\right)\left(1+\kappa\right)}{\tau\kappa}. $$
This latter admits the solution: $\kappa\geq \frac{\tr \boldsymbol{\Theta}}{N} \frac{T-\tau}{\tau}.$ Finally, since the function $\frac{T-\tau}{\tau}$ is decreasing in $\tau$, plus the fact that $\tau\in\left[K,T\right)$, we can simply consider the lower bound given in \eqref{ineq:kappa cond2}, therefore concluding the proof.
}
{
\vspace{-2em}
\section{Proof of Theorem \ref{th:multi-MMSE}}\label{app:MMSE multicell proof}
The same steps and arguments given in appendix \ref{app: conv LMMSE} can be used to find 
the asymptotic approximation $\overline{\rm SE}^{\rm conv,M}_{jk}$ in Theorem \ref{th:multi-MMSE}, and are thus omitted due to space limitations. Nevertheless, the main difference lies in the inter-cell interference term, where it is imperative to take into account the correlation between the estimates and the interfering channels that share the same pilot, s.t, $\forall \ell\neq j$ : $\frac{1}{N}\mathbb{E} \left[\wbhh_{jji} \wbh_{j\ell i}\right]- \frac{1}{N}\tr(\bR_{j\ell i}\bPhi_{jk}{\bR}_{jj i})\asto 0$.
\vspace{-1em}
\section{Proof that pilot contamination is decreasing with respest to $\kappa_{jk}$}\label{app: pilot vs kappa}
First we need the following preliminary results. \vspace{-1em}
\begin{lemma}\cite{MatrixA-Horn2013}\label{lem:PSD}
For any positive semi-definite (PSD) $N\times N$ matrices $\bA$ and $\bB$, the matrices $\bA \bB \bA$, $\bB \bA \bB$ and $\bA+\bB$ are positive semi-definite, and $\tr\left(\bA\bB\right)\geq 0 $. Plus, if $\bA\bB=\bB\bA$, then $\bA\bB$ is also PSD. Finally, if $\bA$ is positive definite (PD), then $\bA^{-1}$ is also PD.  
\end{lemma}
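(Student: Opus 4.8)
The statement collects several classical facts about positive semi-definite (PSD) matrices, and the plan is to establish each clause separately, relying only on the quadratic-form characterisation of positive (semi-)definiteness together with the existence of a PSD square root, the cyclic invariance of the trace, and simultaneous unitary diagonalisation of commuting Hermitian matrices.

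First I would dispatch the quadratic-form clauses. Since $\bA$ is PSD it is Hermitian, so for every $\bx$ one has $\bx\herm\bA\bB\bA\bx=(\bA\bx)\herm\bB(\bA\bx)\ge 0$; as $\bx$ is arbitrary this shows $\bA\bB\bA\succeq 0$, and swapping the roles of $\bA$ and $\bB$ gives $\bB\bA\bB\succeq 0$. Positivity of $\bA+\bB$ is immediate from $\bx\herm(\bA+\bB)\bx=\bx\herm\bA\bx+\bx\herm\bB\bx\ge 0$. For the trace clause I would write the PSD square root $\bA=\bA^{1/2}\bA^{1/2}$ and use cyclicity, $\tr(\bA\bB)=\tr(\bA^{1/2}\bB\bA^{1/2})$; the matrix $\bA^{1/2}\bB\bA^{1/2}$ is PSD by the first clause applied with $\bA^{1/2}$ in place of $\bA$, so its trace, being the sum of nonnegative eigenvalues, is nonnegative.

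Next, for the commuting case I would first check that $\bA\bB$ is Hermitian, since $(\bA\bB)\herm=\bB\herm\bA\herm=\bB\bA=\bA\bB$, and then invoke simultaneous unitary diagonalisation of the commuting Hermitian pair $\bA,\bB$: in the common eigenbasis $\bA=\diag\{a_i\}$ and $\bB=\diag\{b_i\}$ with all $a_i,b_i\ge 0$, whence $\bA\bB=\diag\{a_ib_i\}$ has nonnegative diagonal and is therefore PSD. Finally, if $\bA$ is positive definite it is invertible with strictly positive eigenvalues $\lambda_i$; its inverse is Hermitian, being the inverse of a Hermitian matrix, with eigenvalues $1/\lambda_i>0$, so $\bA^{-1}$ is positive definite as claimed.

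The proof is entirely routine, so I do not anticipate a genuine obstacle; the only clause requiring mild care is the commuting one, where one must justify that two commuting Hermitian matrices admit a common orthonormal eigenbasis before reading off nonnegativity of the products $a_ib_i$. As every clause is classical, this result is simply cited from \cite{MatrixA-Horn2013} in the paper rather than reproved.
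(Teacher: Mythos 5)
Your proof is correct, and the paper itself offers no argument for this lemma---it is simply cited from Horn and Johnson \cite{MatrixA-Horn2013}, so your quadratic-form, square-root-plus-trace-cyclicity, and simultaneous-diagonalisation arguments are exactly the standard textbook proofs the citation points to. Every clause checks out, including the mildly delicate commuting case, where you correctly verify Hermiticity of $\bA\bB$ before invoking the common eigenbasis.
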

Second, let $=\frac{1}{N}\tr({\sum_{\ell\neq j}^L}\bR_{j\ell k}\bPhi_{jk}{\bR}_{jj k})$. A straightforward differentiation of $f(\kappa_{jk})$ yields:\vspace{-1.2em}
\begin{equation}
f'(\kappa_{jk})=\frac{-1}{1+\kappa_{jk}}\frac{1}{N}\tr\left({\sum_{\ell\neq j}^L}\bR_{j\ell k} \bPhi_{jk} (\bR_{jjk} -\tilde{\bR}_{jjk})\right). 
\end{equation}
Accordingly,  we prove in what follows that $\tr\left({\sum_{\ell\neq j}^L}\bR_{j\ell k} \bPhi_{jk} (\bR_{jjk} -\tilde{\bR}_{jjk})\right)\geq 0$ which eventually leads to $f'(\kappa_{jk})\leq 0$. 
In this line, we assume that the correlation matrix $\bR_{j j k}$, is positive definite. That is, we add the `perturbation' $\epsilon \bI_N$, $\epsilon>0$ as follows: 
denote $\underline{\bR}_{jjk}=\bR_{jjk} + \epsilon \bI_N$, and $\underline{\bPhi}_{jk}=\left(\underline{\bR}_{jjk}+ \sum_{\ell\neq j} \bR_{j\ell k}+ \frac{1}{\rho_{tr}}\bI_N\right)^{-1}$. 
Therefore, under this assumption, let $g(\kappa_{jk}, \epsilon) = \tr\left({\sum_{\ell\neq j}^L}{\bR}_{j\ell k} \underline{\bPhi}_{jk} (\underline{\bR}_{jjk} -\underline{\bR}_{jjk} \underline{\bPhi}_{jk} \underline{\bR}_{jj k} )\right)$. Note that $g(\kappa_{jk}, 0)=\tr\left({\sum_{\ell\neq j}^L}\bR_{j\ell k} \bPhi_{jk} (\bR_{jjk} -\tilde{\bR}_{jjk})\right)$. Plus, since $g(\kappa_{jk},\epsilon)$ is continuous, if $g(\kappa_{jk},\epsilon)>0 $, $\forall \epsilon >0$, then $\displaystyle{\lim_{\epsilon\rightarrow 0}} g(\kappa_{jk}, \epsilon) \geq 0$.
\begin{itemize}
\item $\forall \epsilon>0$, let: $\bB_{jk}={\sum_{\ell\neq j}^L}{\bR}_{j\ell k}$, $\bA_{jk}=\left(\bB_{jk} + \frac{1}{\tau \rho_{tr}}\bI_N\right)^{-1}$. 
\end{itemize}
Therefore $\underline{\bPhi}_{jk}= \left(\underline{\bR}_{jjk}+\bA_{jk}^{-1}\right)^{-1}$. Now, using the Woodbury matrix identity, we can put:  $\underline{\bR}_{jjk} \underline{\bPhi}_{jk} \underline{\bR}_{jj k}= \left(\underline{\bR}_{jjk}^{-1} + \bA_{jk}\right)^{-1}$.Therefore: \vspace{-1em}
\begin{align}
g(\kappa_{jk}, \epsilon)& =  \tr\left( \bB_{jk} \left(\underline{\bR}_{jjk}+ \bA_{jk}^{-1}\right)^{-1} \left(\underline{\bR}_{jjk}^{-1}+ \bA_{jk}\right)^{-1} \right) \notag \\
& = \tr\bigg( \bB_{jk} \bA_{jk}  \underbrace{\big( 2 \bA_{jk} + \underline{\bR}_{jjk}^{-1} + \bA_{jk} \underline{\bR}_{jjk}\bA_{jk}\big)^{-1}}_\text{PD due to lemma \ref{lem:PSD}} \bigg). \label{eq:544}
\end{align}
Plus, observing that $\bB_{jk} \bA_{jk}= \bA_{jk} \bB_{jk} $, the product $\bB_{jk} \bA_{jk}$ is also a PSD matrix, based on lemma \ref{lem:PSD}. Consequently, we find that \eqref{eq:544} amounts to the trace of the product of two PSD matrices which is always a positive quantity, thus concluding the proof.}
\end{appendices}
\vspace{-0.5em}{
\bibliographystyle{IEEEtran} 	
\vspace{-0.5em}
\bibliography{ref}}
\end{document}